\pgfplotsset{compat=1.3}
\newcommand{\db}[1]{[\kern-0.15em[ #1
 ]\kern-0.15em]}
\newcommand{\emp}{\langle\rangle}
\newcommand*{\rom}[1]{\expandafter\@slowromancap\romannumeral #1@}
\newcommand{\specialcell}[2][c]{%
  \begin{tabular}[#1]{@{}c@{}}#2\end{tabular}}
\newcommand{\xmark}{\text{\ding{55}}}
\newcommand{\mypara}[1]{\vspace{1.5ex}\noindent\emph{#1}\hspace{1.5ex}}
\newtheorem{remark}{Remark}
\begin{document}
\title{
Automatically `Verifying' Discrete-Time Complex Systems through Learning, Abstraction and Refinement
% A CEGAR Framework for\\ Probabilistic Model Learning
}

\author{Jingyi~Wang,
% ~\IEEEmembership{Member,~IEEE,}
        Jun~Sun,~Shengchao~Qin,
        % ~\IEEEmembership{Member,~IEEE},
        and~Cyrille~Jegourel
        % ~\IEEEmembership{Member,~IEEE}% <-this % stops a space
\IEEEcompsocitemizethanks{\IEEEcompsocthanksitem Jingyi Wang is with College of Computer Science and Software Engineering, Shenzhen University, China. He is also with Singapore University of Technology and Design, Singapore. E-mail: jingyi\_wang@sutd.edu.sg \protect
\IEEEcompsocthanksitem Jun Sun, the corresponding author, is with Singapore University of Technology and Design, Singapore. E-mail: sunjun@sutd.edu.sg.\protect
% note need leading \protect in front of \\ to get a newline within \thanks as
% \\ is fragile and will error, could use \hfil\break instead.
% E-mail: \email{jingyi_wang@mymail.sutd.edu.sg}
\IEEEcompsocthanksitem Shengchao Qin is with School of Computing, Media and the Arts, Teesside University, UK. He is also with College of Computer Science and Software Engineering, Shenzhen University, China. E-mail: S.Qin@tees.ac.uk
\IEEEcompsocthanksitem Cyrille Jegourel is with Singapore University of Technology and Design, Singapore. E-mail: cyrille\_jegourel@sutd.edu.sg
\IEEEcompsocthanksitem This work was supported in part by National Natural Science Foundation of China under Grant No. 61772347 and Science and Technology Foundation of Shenzhen City under Grant No. JCYJ20170302153712968.

}% <-this % stops a space
% \thanks{Manuscript received April 19, 2005; revised August 26, 2015.}
}

% % The paper headers
% \markboth{Journal of \LaTeX\ Class Files,~Vol.~14, No.~8, August~2015}%
% {Shell \MakeLowercase{\textit{et al.}}: Bare Advanced Demo of IEEEtran.cls for IEEE Computer Society Journals}

\IEEEtitleabstractindextext{%
\begin{abstract}
Precisely modeling complex systems like cyber-physical systems is challenging, which often render model-based system verification techniques like model checking infeasible. To overcome this challenge, we propose a method called LAR to automatically `verify' such complex systems through a combination of learning, abstraction and refinement from a set of system log traces. We assume that log traces and sampling frequency are adequate to capture `enough' behaviour of the system. Given a safety property and the concrete system log traces as input, LAR automatically learns and refines system models, and produces two kinds of outputs. One is a counterexample with a bounded probability of being spurious. The other is a probabilistic model based on which the given property is `verified'. The model can be viewed as a proof obligation, i.e., the property is verified if the model is correct. It can also be used for subsequent system analysis activities like runtime monitoring or model-based testing. Our method has been implemented as a self-contained software toolkit. The evaluation on multiple benchmark systems as well as a real-world water treatment system shows promising results.

%The input of our method is a set of logged system executions. We first construct a set of abstract system traces based on the logged data using predicate abstraction. Next, we apply learning techniques to construct a probabilistic model of the underlying system, which is then verified against a given property. If a counterexample is identified, we check whether the counterexample is spurious through hypothesis testing. Afterwards, either we report an actual counterexample or we refine the abstraction with an additional predicate. We remark our method is designed to either identify a counterexample or construct a probabilistic model of the system.
%
%
% Existing model-based system analysis techniques like model-based testing and model checking
%
%Learning probabilistic models automatically from system executions allows adoption of model checking without manually constructing a model. Such techniques need the support of predicate abstraction, without which the learning cost usually will be infeasible. In this paper, we investigate learning probabilistic models (specifically DTMCs) from system executions in the context of predicate abstraction. Given a PCTL safe property, we propose to calculate a minimal set of simple predicates for abstraction based on refinement. Model check the learned abstract model will either refute the property and report a counterexample with some confidence, or conclude the property to be affirmative.
\end{abstract}

% Note that keywords are not normally used for peerreview papers.
\begin{IEEEkeywords}
Verification, model learning, abstraction refinement, Cyber-physical system
\end{IEEEkeywords}}

% make the title area
\maketitle

\IEEEdisplaynontitleabstractindextext
\IEEEpeerreviewmaketitle

\section{Introduction}\label{intro}
\IEEEPARstart{C}{yber-physical} systems (CPS) integrate physical and engineered systems and have the potential to transform the way people interact with engineered systems. They are often used to control public infrastructures like water purification/distribution systems or smart grid systems. When CPS are employed in such safety-critical scenarios, it is desirable to show that they can operate dependably and safely. Analyzing CPS, however, is challenging. Existing system analysis methods, like model-based testing, model checking and theorem proving, require the availability of a system model. %On one hand,
Because CPS closely interact with the physical environment, the model must not only capture the system behavior but also the environment's. Modeling the environment is often hard, due to complicated continuous dynamics in the physical environment. %On the other hand, having a model, even if it only approximates the actual system would be extremely useful for not only analyzing the system against certain given property but also for other objectives like runtime monitoring, system control, etc.

CPS are merely an example of those complex systems for which manual modeling is challenging. To tackle the challenge, multiple approaches that do not rely on manual modeling have been explored. One example is statistical model checking (SMC)~\cite{YounesThesis}. The idea is to provide a statistical measure on the likelihood of satisfying a given property, by observing sample system traces and applying techniques like hypothesis testing~\cite{havelund2002synthesizing,YounesThesis}. %SMC is considered useful partly because it can be applied to black-box or complex systems when system models are not available.
However, SMC has its limitations. For instance, since SMC relies on sampling \emph{finite} system traces, it is challenging to verify un-bounded properties~\cite{younes2011statistical,rohr2013simulative}, like eventually always something good happens. \emph{Furthermore, when SMC claims a property is verified, it provides no insight on why the property holds.} If a new property is given, SMC must be applied from scratch.
Another approach for avoiding manual modeling is to automatically learn models from system logs (i.e., sample system traces). Multiple learning algorithms have been proposed to learn a variety of models, e.g.,~\cite{sen2004learning,ron1996power,carrasco1994learning,de2010grammatical}. It has been shown that such learned models can be useful for system analysis in certain scenarios. Recently, the idea has been extended to learn models for system verification through model checking. In~\cite{AA,AAJ,chen2012learning,mao2012learning}, the authors proposed to learn probabilistic models and then apply techniques like probabilistic model checking (PMC) to calculate the probability of satisfying a property based on the learned model. Compared to SMC, learning could be beneficial as it overcomes several known limitations with SMC. For instance, we can verify unbounded properties based on the learned model. Furthermore, the learned model could be useful for a range of system analysis or control objectives, e.g., for model-based testing and for implementing runtime monitors.

Existing learning methods~\cite{sen2004learning,ron1996power,carrasco1994learning,de2010grammatical,AA,chen2012learning,mao2012learning} however have multiple issues. Firstly, they are designed with a fixed level of abstraction, which limits their applicability to real-world systems. For instance, the traces we obtain from a real-world water treatment system~\cite{swat} capture the reading of 25 sensors plus 26 variables used in the control software. Furthermore, these variables are mostly float or double. Without abstraction, it is hard to learn a precise and reasonably small model of the system. Determining the right level of abstraction is however highly non-trivial. As far as we know, it has not been investigated on how to learn probabilistic models at the right level of abstraction. Secondly, existing learning methods do not take into account the property to be verified. In a recent empirical study~\cite{wang2017should}, it is observed that the verification results based on the learned models could deviate significantly from the actual results.

% \begin{figure*}[]
% \centering
% \includegraphics[height=2in, width=0.6\textwidth]{./figs/workflow_v.pdf}
% \caption{An overview of our framework}
% \label{fig:workflow}
% % \vspace{-4mm}
% \end{figure*}

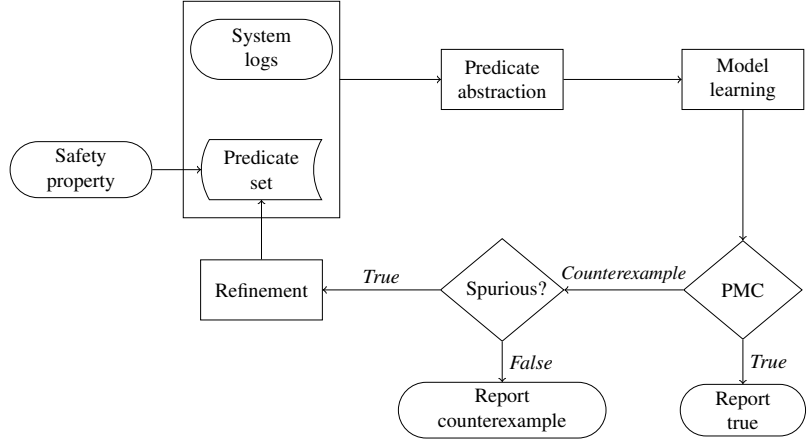
\begin{figure*}[]
	\centering
	\scalebox{.8}{
\begin{tikzpicture}
	% [>=latex’,font={\sf \small}]
\node (sp) at (1,0) [draw, terminal, minimum width=0.5cm, text centered, text width=1.2cm, minimum height=0.5cm] {Safety property};

\node (ps) at (4,0) [draw, storage, minimum width=2cm, text centered, text width=1.5cm, minimum height=1cm] {Predicate set};

\node (sl) at (4,2) [draw, terminal, minimum width=2cm, text centered, text width=1.2cm, minimum height=1cm] {System logs};

\node (pa) at (8,1.5) [draw, process, minimum width=2cm, text centered, text width=1.8cm, minimum height=1cm] {Predicate abstraction};

\node (ml) at (12,1.5) [draw, process, minimum width=2cm, text centered, text width=1.8cm, minimum height=1cm] {Model learning};

\node (mc) at (12, -2) [draw, diamond, aspect=1.2, minimum width=0.5cm, text centered, text width=1.2cm, minimum height = 0.5cm] {PMC};

\node (ht) at (8, -2) [draw, diamond, aspect=1.2, minimum width=0.5cm, text centered, text width=1.2cm, minimum height = 0.5cm] {Spurious?};

\node (rf) at (4,-2) [draw, process, minimum width=2cm, text centered, text width=1.8cm, minimum height=1cm] {Refinement};

\node (rc) at (8,-4) [draw, terminal, minimum width=0.5cm, text centered, text width=2.3cm, minimum height=0.5cm] {Report counterexample};

\node (rt) at (12,-4) [draw, terminal, minimum width=0.4cm, text centered, text width=1cm, minimum height=0.5cm] {Report true};

\draw (2.7,-0.8) rectangle (5.3,2.8);

\draw[->] (5.3,1.5) -- (pa);
\draw[->] (sp) -- (ps);
\draw[->] (pa) -- (ml);
\draw[->] (ml) -- (mc);
\draw[->] (mc) -- node [above] {\small{\textit{Counterexample}}} (ht);
\draw[->] (mc) -- node [right] {\textit{True}}  (rt);
\draw[->] (ht) -- node [right] {\textit{False}} (rc);
\draw[->] (ht) -- node [above] {\textit{True}} (rf);
\draw[->] (rf) -- (ps);
 \end{tikzpicture}}
 \caption{An overview of our framework}
 \label{fig:workflow}
 \end{figure*}

In this work, we aim to develop a method to `verify' CPS by learning models at a level of abstraction which is ideal for verifying or falsifying the given property, and by generating verification results which are validated against the actual system. The method we propose is called LAR, which is a novel  combination of probabilistic model \emph{l}earning and counterexample guided \emph{a}bstraction \emph{r}efinement (CEGAR)~\cite{cegar,clarke2003counterexample,probcegar}. Combing learning and CEGAR is far from straightforward, due to the lack of a ``ground-truth'' model.

The overall workflow is shown in Fig.~\ref{fig:workflow}. The input of LAR includes a probabilistic property and a set of system traces, which could be obtained from a logger in the system. We assume that the logging mechanism (i.e., the logged variables and the logging frequency) are adequate to capture `enough' behaviour of the system regarding the property. We first construct a set of abstract system traces through predicate abstraction. %This is necessary because actual system log data often contain a lot of details which may or may not be relevant to the given property.
Next, we apply existing automatic model learning techniques to construct a probabilistic model of the system, in the form of a discrete time Markov chain (DTMC). %The model can be viewed as a generalization of the logged system executions.
Afterwards, we apply PMC to verify the model against the property. If a counterexample is identified, we check whether the counterexample, in the form of a set of paths of the learned model, is spurious. Notice that because we do not have a model of the system, we cannot completely verify whether the counterexample is spurious or not. We solve the problem by applying hypothesis testing to bound the probability of the counterexample being spurious. If the counterexample is not spurious, we report that the system fails the property. Otherwise, we analyze the counterexample to generate a new predicate which would rule out the counterexample. 

Due to the lack of a system model, standard methods for generating new predicates (e.g., weakest pre-condition calculation~\cite{clarke2003counterexample} or interpolation~\cite{henzinger2004abstractions}) are infeasible. We solve the problem by adopting classification techniques from the machine learning community (refer to Section~\ref{refinement} for details).  We then repeat the process until either we have identified a counterexample or have constructed a probabilistic model of the system based on which the property is verified. %In the latter case, we further validate the model through simulation.
In the latter case, the model generated by our method could be viewed as a proof obligation (i.e., the property is verified if the model is correct), which could be further evaluated by experts or through other means (like statistical validation of the model). Furthermore, our learned models can be potentially used for many subsequent system analysis activities. For instance, we could use the models for model checking~\cite{kwiatkowska2011prism} and implementing runtime monitors~\cite{sistla2011runtime}. The models can also help design more robust systems~\cite{karsai2008model}. We could also use them for model-based testing to generate more useful test cases~\cite{pretschner2005model}. Lastly, our method has the potential to help people understand how a complex system works by automatically extracting relevant predicates. We implemented our method as a software toolkit and applied it to benchmark systems and a real-world water treatment system. The evaluation validated our approach well. 

The remainder of the paper is organized as follows. Section~\ref{back} reviews necessary background and defines our problem. Section~\ref{approach} presents the details of our approach. Section~\ref{evaluation} presents our implementation and evaluates our method. Section~\ref{related} reviews related work and Section~\ref{con} concludes.

\section{Background}\label{back}
% In the following, we review the necessary background and define the problem.
%\subsection{Preliminaries}\label{prem}
\mypara{System Assumptions.} We assume that the system under analysis, denoted as $M$, has $n$ observable variables, i.e., $V_M=\{V_1,V_2,\cdots,V_n\}$. Each variable $V_i$ is associated with a domain $D_i$ which may have infinite values. For instance, the variables in a water treatment system include those variables in the control software (with finite domains) as well as those representing the environment (with infinite domains).
We assume that $M$ is a complicated system such that we do not know exactly how the variables $V_M$ change. However, we assume that $M$ is deterministic if $V_M$ is fully observed, i.e., the same sequence valuations of $V_M$ lead to the same system behaviors. The state of $M$ can be observed through implementing a logger in the system which (e.g., periodically) outputs the valuation of $V_M$. Thus, we can obtain a set of finite traces of the system. In the following, we assume that the logged variables and the sampling frequency are adequate to capture the relevant behavior of the system.

We write $\Sigma(V_M)$ (hereafter $\Sigma$ for short) to denote $D_1 \times \cdots \times D_n$, which is the set of all observable states of $M$. Note that $\Sigma$ may be infinite if there are float variables with infinite domains. A finite system trace is a sequence $\pi=\langle s_1,s_2,\cdots,s_n \rangle$ where $s_i\in\Sigma$ for all $i$. In other words, a finite system trace can be seen as a finite string over $\Sigma$. We denote the set of all finite strings over $\Sigma$, including the empty string $\emp$, as $\Sigma^{\star}$. Given a string $t$, we say that a string $t'$ is a prefix of $t$ if and only if there is a string $t'' \in \Sigma^{\star}$ such that $t = t' \cdot t''$ where $\cdot$ denotes string concatenation. We write $\mathit{prefix}(t)$ to denote the set of all prefixes of $t$. We denote a system trace with length $l$ by $\pi^l$ and its $k$-th letter by $\pi[k]$. We also use $\pi[i\cdots j]$ to denote the trace from $\pi[i]$ to $\pi[j]$.

\mypara{System model.} Let $\Sigma_0$ be the set of initial states of system $M$. If we impose a prior probability distribution $\mu_0$ on the states in $\Sigma_0$ (e.g., a uniform distribution over all states in $\Sigma_0$), $M$ can be effectively viewed as a discrete time Markov Chain (DTMC)~\cite{pmc}. Formally, %Formally, given a countable set of states $S$, a probability distribution is a function $\mu: S \rightarrow [0,1]$ such that $\sum_{s \in S} \mu(s) = 1$. We write $\mathit{Distr}(S)$ to be the set of all distributions over $S$.
\\
\newtheorem{definition}{Definition}
\begin{definition}
A DTMC $\mathcal{D}$ is a tuple $(S,\imath_{init},Pr)$, where $S$ is a countable, nonempty set of states
% \footnote{We use $\mathrm{s}$ to denote a state in $\mathcal{D}$ and $s$ to denote an observable state in $\Sigma(V_M)$}
; $\imath_{init}: S \to [0,1]$ is the initial distribution s.t.~$\sum_{\mathrm{s}\in S}\imath_{init}(\mathrm{s})=1$; and $Pr: S \times S \to [0,1]$ is the transition probability assigned to every pair of states which satisfies the following condition: $\sum_{\mathrm{s}'\in S}Pr(\mathrm{s},\mathrm{s}')=1$. %$L:S\to 2^{AP}$ is a labeling function. $\mathcal{D}$ is \emph{finite} if $S$ is finite.
\end{definition}~\\
\noindent An example DTMC is shown in Fig.~\ref{fig:dtmc}, where states are labeled for readability. A DTMC induces an underlying digraph where states act as vertices and there is an edge from $\mathrm{s}$ to $\mathrm{s}'$ if and only if $Pr(\mathrm{s},\mathrm{s}')>0$. Given a path $\pi = \langle \mathrm{s}_1, \mathrm{s}_2, \cdots, \mathrm{s}_n \rangle$ in $\mathcal{D}$, we write $\mathcal{P}(\pi, \mathcal{D}) = Pr(\mathrm{s}_1, \mathrm{s}_2) \times Pr(\mathrm{s}_2, \mathrm{s}_3) \times \cdots \times Pr(\mathrm{s}_{n-1}, \mathrm{s}_n)$ to denote the probability of exhibiting $\pi$ in $\mathcal{D}$. Furthermore, we write $\mathit{Path}_{fin}(\mathcal{D})$ to denote the set of finite paths of $\mathcal{D}$ starting with an initial state, i.e., a state $\mathrm{s}$ such that $\imath_{init}(\mathrm{s}) > 0$.

$M$ can be viewed as a DTMC $\mathcal{D}_M = (S,\imath_{init},Pr)$ where $S = \Sigma$ is the states of $M$; $\imath_{init} = \mu_0$ is the imposed prior probability distribution of the initial states.

 \begin{figure}[t]
	 \centering
 \begin{tikzpicture}[->,>=stealth',shorten >=1pt,auto]
    % \tikzset{state/.style={circle,draw,thin,minimum size=0.5cm}},
% \tikzstyle{every state}=[text=black]

\node[initial,state] (s1) {$a$};
\node[state] (s3) [right = 1.5cm of s1] {$c$};
\node[state] (s2) [below = 1.2cm of s1] {$b$};
\node[state] (s4) [right = 1.5cm of s2] {$d$};

\path (s1) edge [loop above] node {0.98} (s1)
edge [pos=0.5] node {0.01} (s2)
edge node {0.01} (s3)
(s2) edge [loop left] node {0.5} (s2)
edge [] node {0.5} (s4)
(s3) edge [loop above] node {0.9} (s3)
edge node {0.1} (s4)
(s4) edge [loop right] node {1} (s4)
;
\end{tikzpicture}
 \caption{An example DTMC}
  \label{fig:dtmc}
 \end{figure}
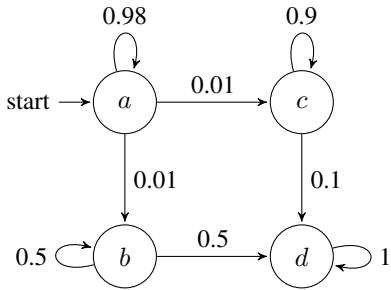

\mypara{Model learning.} Once we obtain a set of system traces from the system logger, we can then apply a probabilistic model learning algorithm to `learn' a DTMC. Most existing model learning algorithms are inspired by ALERGIA~\cite{carrasco1994learning}, which first transforms the traces into a prefix tree and then iteratively merges the tree nodes based on whether the two nodes are compatible. Based on the same idea, a genetic-algorithm based learning algorithm is proposed in~\cite{wang2017should} to merge the tree nodes through evolution.
We present one of the latest ALERGIA-style learning algorithms as a representative~\cite{AA,AAJ} in Section~\ref{learning-algo}. With the learned DTMC, we can verify safety properties by probabilistic model checking.

\mypara{Probabilistic model checking.} Assume that we are given a property in the form of a probabilistic linear temporal logic formula (PLTL)~\cite{hansson1994logic,baier2003comparative}. The syntax of the PLTL is defined as follows.
\begin{displaymath}
	\varphi::=p|\neg \varphi | \varphi \land \varphi | \mathcal{P}_{\bowtie r}(\varphi \textbf{U} \varphi)
\end{displaymath}
where $p$ is an atomic proposition constituted by $V$, $\ r\in[0,1]\subseteq \mathcal{R}$, $\bowtie\in\{<,\leq,>,\ge\}$ and \textbf{U} is a modal operator which reads `until'. The properties of interest in this work is the safety fragment of PLTL properties. In the following, we focus on the property of the form\footnote{Recursive PLTL properties are not handled due to the limit of model learning~\cite{AAJ}.}: $\mathcal{P}_{\leq r}(\textbf{F}\varphi)$ where $\textbf{F} \varphi$ is a short form for $true \textbf{U} \varphi$.
% , assuming an understanding that our approach works for any safety PLTL property. 
Intuitively, this property states that the probability of eventually reaching a bad state does not exceed a certain threshold.  For instance, property $\mathcal{P}_{\leq 0.2}(\textbf{F}\ observe0 > 1)$ states that the property of reaching a state with $observe0 > 1$ is less than 0.2. A DTMC $\mathcal{D}$ satisfies $\mathcal{P}_{\leq r}(\textbf{F}\varphi)$ if and only if the accumulated probability of all paths in $\mathit{Path}_{fin}(\mathcal{D})$ which satisfy $\textbf{F}\varphi$ is less than or equal to $r$. We write $\mathcal{D} \vDash \mathcal{P}_{\leq r}(\textbf{F}\varphi)$ to denote that DTMC $\mathcal{D}$ satisfies $\mathcal{P}_{\leq r}(\textbf{F}\varphi)$.

The reason that we focus on probabilistic properties is that safety-critical CPS are often designed with built-in mechanisms for handling safety violation (e.g., a shutdown sequence is triggered once a violation is detected). The goal of analysis is thus to show such safety violation is rare (i.e., with low probability). For instance, in our case study of the water treatment system, a safety property is that the water level in the backwash tank must be within certain range. Otherwise, a system shutdown is triggered. Our task is then to show the probability of the water level being out of the range is low enough such that the probability of triggering system shutdown is low. \\

\noindent \textbf{Problem Definition.} Our problem is then defined as follows. Given a system $M$ and a property $\mathcal{P}_{\leq r}(\textbf{F}\varphi)$, how do we check whether the property is satisfied by $M$ without having its model $\mathcal{D}$? In case that the property is not satisfied, can we present some evidence , i.e., in the form of counterexamples? In case that the property is satisfied, can we present some evidence as well?

\section{Our Approach} \label{approach}
Since the system is complicated and a precise system model is unavailable, we cannot apply existing techniques like PMC to check whether the property is satisfied or not.
% Although the problem is challenging, it is relevant.
% and our approach is an initial attempt towards solving it.
One way to solve the problem is to construct a DTMC model $\mathcal{D}$ approximating $M$ and then verify the given property based on $\mathcal{D}$. There are however a number of questions that must be answered in order to make this approach work. First, how do we construct $\mathcal{D}$ systematically? In particular, what are the states in $\mathcal{D}$ and what are the transition probabilities? If we consider every different valuation of variables $V_M$ to be a different state, $\mathcal{D}$ is likely to contain many (and often infinitely many) states. Such $\mathcal{D}$ is difficult to learn or verify. We thus would like to construct a small $\mathcal{D}$ (with states more abstract than a valuation of $V_M$) which would be used to verify the property. Secondly, after constructing $\mathcal{D}$ and verifying the property based on $\mathcal{D}$, how do we quantify the confidence we have on the verification result, knowing that $\mathcal{D}$ may not be precise? In the following, we present answers of these questions. For simplicity, we illustrate how our approach works using the following running example (which is motivated by the \emph{crowds} protocol~\cite{RR98}). \\

\noindent \textbf{Example} Assume a system where multiple users are browsing the web (i.e., sending/receiving messages to/from Web servers). Further assume there are eavesdroppers on the network who can observe the \emph{direct} source of a message. In order to provide anonymity for the users, a message from a user is not directly sent to the destination, but rather routed among the users so that the eavesdroppers cannot identify the actual source of the message. Each user in the system uses a complicated algorithm to decide on whether to forward a received message to its destination or to some other user on the network. Assume there are a total of $u$ users. The property to be verified is that the probability of observing a message sending to its destination by its actual source more than once in $r$ runs should not exceed a threshold, say $0.2$. Formally, it is specified as: ${\cal P}_{\leq\ 0.2}(\textbf{F}\ observe0>1)$ where $observe0$ is a variable in the system which captures the number of message sending to its destination by its actual source. Without knowing how each user decides on forwarding the messages, we cannot develop a precise model.

% Please add the following required packages to your document preamble:
% \usepackage{booktabs}
\begin{table}[]
\centering
\caption{Details of the variables in the $crowds$ protocols.}
\label{crowds-vars}
\begin{tabular}{@{}lll@{}}
\toprule
Name                  & Type & Meaning                              \\ \midrule
launch                & bool & Start modeling?                      \\
new                   & bool & Initialize a new protocol instance?  \\
runCount              & int  & Counts protocol instances            \\
start                 & bool & Start the protocol?                  \\
run                   & bool & Run the protocol?                    \\
lastSeen              & int  & Last crowd member to touch msg       \\
good                  & bool & Crowd member is good?                \\
bad                   & bool & Crowd member is bad?                 \\
recordLast            & bool & Record last seen crowd member?       \\
badObserve            & bool & Bad members observes who sent msg?   \\
deliver               & bool & Deliver message to destination?      \\
done                  & bool & Protocol instance finished?          \\
observe0 to observe19 & int  & Counters for attackers' observations \\ \bottomrule
\end{tabular}
\end{table}

\subsection{Predicate Abstraction}
%Recall that given a system $M$ and a property specified as a PCTL formula $\Phi$, our goal is to check whether system $M$ satisfies $\Phi$.
In our approach, we start with collecting a set of system traces $\Pi$, by introducing a logger in the system and executing the system multiple times. %We remark that logging is a standard practice for many systems.
For instance, in the running example, we log the valuation of all 32 variables, including where the message is originated, where it is forwarded to, etc. The details of all the logged variables are shown in Table~\ref{crowds-vars}. The logged traces contain many details, most of which may not be relevant to verifying the property. Thus, we start with abstracting the traces, through predicate abstraction~\cite{wachter2007probabilistic}. Recall that a concrete trace is a sequence $\langle s_0, s_1, \cdots, s_n \rangle$ where $s_i \in \Sigma$ for all $i$ is a valuation of variables $V_M$. Predicate abstraction is to construct an abstract trace $\langle a_0, a_1, \cdots, a_n \rangle$ where, for all $i$, $a_i$ is the valuation of a set of propositions $AP$ given $s_i$.

%Let $AP$ be a set of atomic propositions constituted by variables in $V$. Given a state $\sigma\in\Sigma$, every atomic proposition $\delta\in AP$ can either be true or false.

Let $\mathit{BExpr}_V$ denote the Boolean expressions over $V$. A proposition $\psi$ is a Boolean expression over a set of variables. For an expression $e \in \mathit{Expr}_{V}$, we denote its valuation in state $s \in \Sigma$ by $\db{e}_s$. For a Boolean expression $e$, $\db{e}_s\in\{0,1\}$ where 0 stands for \emph{false} and 1 for \emph{true}. We write $s\models\psi$ iff $\db{\psi}_s=1$. We denote the set of states satisfying a predicate $\psi$ by $\db{\psi}=\{s|s\in\Sigma \land s\models\psi\}$.

Let $P =\{p_1,\cdots,p_k\} \subseteq \mathit{BExpr}_V$ be a set of propositions over $V$. Given a state $s \in \Sigma$, we define an abstraction function as: $\alpha_P(s)=(\db{p_1}_s,\cdots,\db{p_k}_s)$, which maps the state $s$ to an abstract state, i.e., a bit vector with length $k$ where each bit represents the truth value of a proposition in $P$.
We write $\Sigma_P$ to denote the set of abstract states with respects to $P$. %Two states in $\Sigma$ are equivalent in $\Sigma^{\cal P}$ if they satisfy exactly the same set of predicates in $\mathcal{P}$.
Given a logged trace $\pi = \langle s_1, s_2, \cdots, s_n \rangle$, we can construct an abstract trace with respect to predicates $P$ as: $\pi_P = \langle \alpha_P(s_1), \alpha_P(s_2), \cdots, \alpha_P(s_n) \rangle$.
Let $\Pi$ be the set of sample traces such that each $\pi \in \Pi$ is a string in $\Sigma^*$. We can construct a set of abstract traces $\Pi_P$ from $\Pi$ by abstracting each trace in $\Pi$ one by one.

In our running example, we assume that the set $\Pi$ contains two logged traces: $\langle 0,0,1,1 \rangle$ and $\langle 0,1,2,2 \rangle$ where each number denotes a value of variable $observe0$. Note that we have removed the values of all other 32 variables for simplicity. %We remark that this itself is an abstraction.
Based on the above-mentioned property, we set the initial set of predicates $P$ to be $\{observe0>1\}$. After predicate abstraction, we obtain two corresponding abstract traces: $\langle 0,0,0,0 \rangle$ and $\langle 0,0,1,1 \rangle$ where a number $0$ means that $observe0\leq1$ and 1 means $observe0>1$.
%Then we apply model learning from the abstract executions to learn an abstract model. After model checking on the abstract model, the property to verify is refuted and a counterexample is returned. However, hypothesis testing shows that the counterexample is actually spurious. Then we analyze the source of spuriousness and collect two sets of data to find a divider $``lastSeen=1"$. The divider is added to the predicate set and it's used to obtain a new set of abstract executions for learning, which begins a new round of iteration. The property is finally proved to be affirmative. Refer to~\cite{KNP12b} for more details of the meaning of the predicates.

%In this section, we present the learning of system models in the setting of predicate abstraction from multiple system executions collected. we refer to the set of all possible states as an alphabet $\Sigma^{\sharp}$ and a state $s\in\Sigma^{\sharp}$ as a letter $e$ in the alphabet. A set of system executions $S[n]$ is thus a set of strings over $\Sigma^{\sharp}$. Fix a set of predicates $\mathcal{P}=\{\psi_1,\cdots,\psi_p\}$.

\subsection{Model Learning}\label{learning-algo}
With the set of abstract traces $\Pi_P$, we then apply existing model learning techniques~\cite{AA,AAJ,chen2012learning,wang2017should} to construct a DTMC model $\mathcal{D}_P$. %Formally, $D_P = (Q,q_0,Pr)$ is a DTMC over $\Sigma_P$ such that $Q$ is a countable, nonempty set of states; $q_0 \in Q$ is the initial state; and $T:Q\times Q\to [0,1]$ is the transition probability such that $\sum_{q'\in Q}Pr(q,q')=1$. $\mathcal{D}$ is called finite if $Q$ is finite. Notice that without loss of generality, we assume there is a unique initial state for a DTMC. %A \emph{path} of a DTMC is an infinite state sequence $\delta=s_1s_2\cdots$ such that $Tr(s_i,s_{i+1})>0$ for all $i\geq 1$. Denote the set of all infinite paths of $\mathcal{D}$ by $Path_{\infty}^{\mathcal{D}}$. Let $Path^{\mathcal{D}}(s)$ denote the set of all infinite paths of $\mathcal{D}$ with starting state $s$.
The essential idea of model learning is to construct a DTMC $\mathcal{D}_P$ such that it maximizes the probability of observing the traces $\Pi_P$ as well as contains a relatively small number of states. Several learning algorithms have been proposed~\cite{sen2004learning,ron1996power,carrasco1994learning,de2010grammatical,wang2017should}. In the following, we present one of the learning algorithms called AALERGIA~\cite{AA,AAJ} as a representative and remark that our approach can be configured to work with different learning algorithms, as we show in Section~\ref{evaluation}.

%Learning probabilistic models for model checking is inspired by stochastic regular grammatical inference~\cite{kermorvant2002stochastic}, which aims to characterize the system as a DTMC and estimate the probability distributions.
\mypara{AALERGIA learning algorithm} The AALERGIA algorithm makes the following assumptions. First, the sample traces are generated by a system which can be modeled as a DTMC. % , i.e., no non-determinism.
Secondly, the sample traces are mutually independent. Thirdly, the length of each trace is independent from the sequence of states observed. The AALERGIA algorithm has been proved to converge to the actual model if a sufficiently large number of sample traces are provided~\cite{AA,AAJ}. % However, since the sample traces are often limited in practice, the learned result is not guaranteed to be accurate~\cite{wang2017should}. %In this work, we assume that the data we learn from is sufficient for learning an approximate model that is good enough for verification purpose.

%After predicate abstraction, we have an abstract alphabet $\Sigma^{\sharp}$ of the system observations, where each letter $e\in\Sigma^{\sharp}$ is an abstract system observation represented by a bit vector. An abstract execution $\pi^\sharp$ is a string over $\Sigma^\sharp$. The input of learning $S^{\sharp}[o]$ is thus a finite set of strings over $\Sigma^\sharp$. To simplify notation, let the input strings $S^{\sharp}[n]$ be $\Pi$ and the abstract alphabet $\Sigma^{\sharp}$ be $\Sigma$ without ambiguity in the rest of this section. Let $\textit{prefix}(\pi)$ be the set of all prefixes of $\pi$ including the empty string $\emp$ and $\textit{prefix}(\Pi)$ be the set of all prefixes of strings $\pi\in\Pi$.

Let $\mathit{prefix}(\Pi_P) = \{\mathit{prefix}(t) | t \in \Pi_P\}$ be the set of all prefixes of any trace in $\Pi_P$. The set of traces $\Pi_P$ can be naturally organized into a tree $\mathit{tree}(\Pi_P) = (N, root, E)$ where each node in $N$ is a member of $\mathit{prefix}(\Pi_P)$; the root is the empty string $\langle \rangle$; and $E \subseteq N \times N$ is a set of edges such that $(\pi, \pi')$ is in $E$ if and only if there exists $e \in \Sigma$ such that $\pi \cdot \langle e \rangle = \pi'$. For instance, assume that  $\Pi_P$ in our running example contains 100 abstract traces: 88 of them are $\langle 0,0,0,0 \rangle$, 2 of them are $\langle 0,0,0,1 \rangle$, 2 of them are $\langle 0,0,1 \rangle$, and 8 of them are $\langle 0,0,1,1 \rangle$. Fig.~\ref{fig:tree0} shows the tree representing this set of abstract traces, where each node represents a prefix of some abstract trace. The state labels are to be explained later.

% \begin{figure}[t]
% \centering
% \includegraphics[height=1.5in, width=2.7in]{figs/learning.pdf}
% \caption{Example tree representation of samples}
% \label{fig:tree}
% \end{figure}

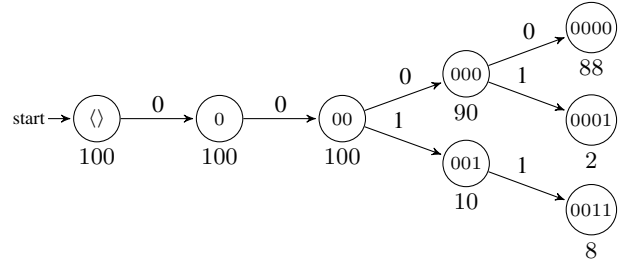
\begin{figure}[t]
\centering
{\scriptsize\begin{tikzpicture}[>=stealth',shorten >=1pt,auto,every node/.style={minimum size=1.5em,inner sep=1}]
	
  \node[initial,state,label=below:{\small $100$}] (A0)      {$\langle\rangle$};
  \node[state,label=below:{\small $100$}]         (A1) [right = 1cm of A0]  {$0$};
  \node[state,label=below:{\small $100$}] 		   (A2) [right = 1cm of A1]  {$00$};
  \node[state,label=below:{\small $90$}]         (B0) [above right = 0.15cm and 1.2cm of A2] {$000$};
  \node[state,label=below:{\small $88$}]         (B1) [above right = 0.15cm and 1.2cm of B0] {$0000$};
  \node[state,label=below:{\small $2$}]         (B2) [below right = 0.15cm and 1.2cm of B0] {$0001$};

  \node[state,label=below:{\small $10$}]         (C0) [below right = 0.15cm and 1.2cm of A2] {$001$};
  \node[state,label=below:{\small $8$}]         (C1) [below right = 0.15cm and 1.2cm of C0] {$0011$};
  
  \path[->] (A0)  
             edge              node {\small 0} (A1)
			 
			(A1)  edge node {\small 0} (A2)
			
			(A2)  edge [pos=0.7] node {\small 0} (B0)
				  edge [pos=0.25] node {\small 1} (C0)

        	(B0) edge [pos=0.7] node {\small 0} (B1)
				 edge [pos=0.25] node {\small 1} (B2)

			(C0) edge [pos=0.25] node {\small 1} (C1);
			           
\end{tikzpicture}}
\caption{Example tree representation of samples}
\label{fig:tree0}
\end{figure}

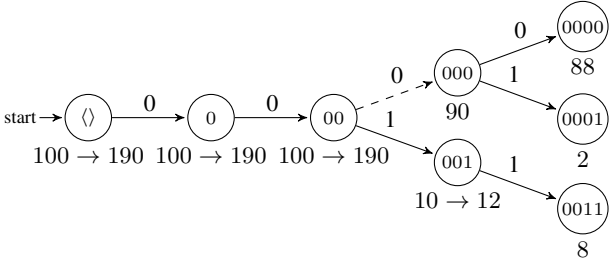
\begin{figure}[t]
\centering
{\scriptsize\begin{tikzpicture}[>=stealth',shorten >=1pt,auto,every node/.style={minimum size=1.5em,inner sep=1}]
	
  \node[initial,state,label=below:{\small $100\to 190$}] (A0)      {$\langle\rangle$};
  \node[state,label=below:{\small $100\to 190$}]         (A1) [right = 1cm of A0]  {$0$};
  \node[state,label=below:{\small $100\to 190$}] 		   (A2) [right = 1cm of A1]  {$00$};
  \node[state,label=below:{\small $90$}]         (B0) [above right = 0.15cm and 1.2cm of A2] {$000$};
  \node[state,label=below:{\small $88$}]         (B1) [above right = 0.15cm and 1.2cm of B0] {$0000$};
  \node[state,label=below:{\small $2$}]         (B2) [below right = 0.15cm and 1.2cm of B0] {$0001$};

  \node[state,label=below:{\small $10\to 12$}]         (C0) [below right = 0.15cm and 1.2cm of A2] {$001$};
  \node[state,label=below:{\small $8$}]         (C1) [below right = 0.15cm and 1.2cm of C0] {$0011$};
  
  \path[->] (A0)  
             edge              node {\small 0} (A1)
			 
			(A1)  edge node {\small 0} (A2)
			
			(A2)  edge [dashed,pos=0.7] node {\small 0} (B0)
				  edge [pos=0.25] node {\small 1} (C0)

        	(B0) edge [pos=0.7] node {\small 0} (B1)
				 edge [pos=0.25] node {\small 1} (B2)

			(C0) edge [pos=0.25] node {\small 1} (C1);
			           
\end{tikzpicture}}
\caption{Merge node $00$ and node $000$.}
\label{fig:tree}
\end{figure}

The AALERGIA algorithm is inspired by stochastic regular grammatical inference, which aims to learn the structure of a stochastic finite state automaton and estimate its transition probabilities~\cite{kermorvant2002stochastic}. The idea is to generalize $\mathit{tree}(\Pi_P)$ by merging the tree nodes according to certain criteria in certain fixed order. Intuitively, two nodes should be merged if they are likely to represent the same state in the underlying DTMC. Since by assumption we do not know the underlying DTMC, whether two nodes should be merged is heuristically decided through a \emph{compatibility test}, which intuitively measures how similar two nodes are.  We remark that the compatibility test effectively controls the degree of generalization. Different types of compatibility test have been studied~\cite{carrasco1994learning,ron1995learnability,kermorvant2002stochastic}. In~\cite{carrasco1994learning}, the compatibility test is based on the Hoeffding bounds, whereas the DSAI algorithm proposed in~\cite{de2010grammatical} uses a condition based on calculating certain distance between the two nodes. In the following, we present the compatibility test adopted in AALERGIA. 

First, each node $\pi$ in $\mathit{tree}(\Pi_P)$ is labeled with the number of traces $\pi'$ in $\Pi_P$ such that $\pi$ is a prefix of $\pi'$. Let $L(\pi)$ denotes its label. Two nodes $\pi_1$ and $\pi_2$ in $\mathit{tree}(\Pi_P)$ are considered compatible if and only if they satisfy two conditions. The first condition is $\mathit{last}(\pi_1) = \mathit{last}(\pi_2)$ where $\mathit{last}(\pi)$ is the last letter in a string $\pi$, i.e., $\pi_1$ and $\pi_2$ must agree on the last abstract state. The second condition is that the future behaviors from $\pi_1$ and $\pi_2$ must be sufficiently similar (i.e., within Angluin's bound~\cite{angluin1988identifying}). Formally, given a node $\pi$ in $\mathit{tree}(\Pi_P)$, we can obtain a probabilistic distribution of the next node by normalizing the labels of the node and its children. In particular, for any event $e \in \Sigma_P$, the probability of going from node $\pi$ to $\pi \cdot \langle e \rangle$, i.e., the probability of observing $e$ after $\pi$, is defined as: $Pr(\pi, \langle e \rangle) = \frac{L(\pi \cdot \langle e \rangle)}{L(\pi)}$. We remark the probability of going from node $\pi$ to itself, i.e., the probability of not observing the next node, is $Pr(\pi,\langle\rangle)=1-\sum_{e\in\Sigma_P}Pr(\pi,\langle e\rangle)$. The multi-step probability from node $\pi$ to $\pi \cdot \pi'$ where $\pi' = \langle e_1, e_2, \cdots, e_k \rangle$, written as $\it{Pr}(\pi, \pi')$, is the product of the one-step probabilities.
\begin{equation}
	\begin{split}
Pr(\pi, \pi') = \it{Pr}(\pi, \langle e_1 \rangle) \times \it{Pr}(\pi \cdot \langle e_1 \rangle, \langle e_2 \rangle)\times\\
	\cdots \times \it{Pr}(\pi \cdot \langle e_1, e_2, \cdots, e_{k-1} \rangle, \langle e_k \rangle)
	\end{split}
\end{equation}
% \[\it{Pr}(\pi, \pi') = \it{Pr}(\pi, \langle e_1 \rangle) \times \it{Pr}(\pi \cdot \langle e_1 \rangle, \langle e_2 \rangle) \times \cdots \times \it{Pr}(\pi \cdot \langle e_1, e_2, \cdots, e_{k-1} \rangle, \langle e_k \rangle)\]
Two nodes $\pi_1$ and $\pi_2$ are compatible if the following is satisfied: for all $\pi \in \Sigma_P^{\star}$,
\begin{equation}
	\begin{split}
|Pr(\pi_1, \pi) - \it{Pr}(\pi_2, \pi)|<\sqrt{6 \epsilon \log(L(\pi_1))/L(\pi_1)} +\\
		\sqrt{6 \epsilon \log(L(\pi_2))/L(\pi_2)}
	\end{split}
	\label{eq:test}
\end{equation}
Intuitively, it means that the distribution of future traces from $\pi_1$ and $\pi_2$ must be similar. We highlight that $\epsilon$ used in Eq.~\ref{eq:test} is a parameter which effectively controls the degree of node merging. Intuitively, a larger $\epsilon$ leads to more node merging and subsequently fewer states in the learned model $\mathcal{D}_P$. It is also required that $\epsilon$ should be larger than 1 to guarantee the convergence of the algorithm.

If $\pi_1$ and $\pi_2$ are compatible, the tree is transformed such that the incoming edge of $\pi_2$ is directed to $\pi_1$. Next, for any $\pi \in \Sigma_P^*$, $L(\pi_1 \cdot \pi)$ is incremented by $L(\pi_2 \cdot \pi)$. The algorithm works by iteratively identifying nodes which are compatible and merging them until there are no more compatible nodes. The order of choosing merging candidates is hierarchical: first in order of tree depth and for a given depth in the alphabet order of the last observation.

Recall that Fig.~\ref{fig:tree0} is the tree representing 100 abstract traces of our running example.
%Consider the DTMC of \emph{egl} in Fig.~\ref{fig:dtmc} along with the labeled tree representation of input strings in Fig.~\ref{fig:tree}.
%Take root node $00$ for example, successor nodes which are labeled by $00$, $01$, $10$ will be reached by strings with prefix $\{00,00\}$, $\{00,01\}$, $\{00,10\}$, respectively.
The labels on the nodes are the \emph{numbers} of times the corresponding string is a prefix of some trace in $\Pi_P$. For instance, node $\langle\rangle$ is labeled with 100 because $\langle\rangle$ is a prefix of all traces. This tree can be viewed as the initial learned model which has no generalization. Next, the tree is generalized by merging nodes. Assume that node $\langle 00 \rangle$ and node $\langle 000 \rangle$ in Fig.~\ref{fig:tree0} pass the compatibility test so that they are to be merged. Firstly, we update the node label of $\langle 00 \rangle$ to be the sum of the numbers labeling $\langle 00 \rangle$ and $\langle 000 \rangle$. Secondly, the numbers labeling decedents of $\langle 000 \rangle$ are added to the corresponding decedent nodes of $\langle 00 \rangle$. The result is shown in Fig.~\ref{fig:tree}, where the numbers after the arrow are the ones after node merging. For instance, since $L(\langle 000 \rangle \cdot \langle 1 \rangle)$ is 2, we update the label of node $\langle 001 \rangle$ from 10 to 12. 
% Afterwards, the subtree rooted at $\langle 000 \rangle$ (i.e., the dashed-lined nodes and edges) is pruned.

After merging all compatible nodes, the last step is to normalize the tree so that it becomes a DTMC $\mathcal{D}_P$. In particular, each node $\pi$ is taken as a state in $\mathcal{D}_P$. The transition probability from $\pi$ to a child $\pi'$ is set to be: $\frac{L(\pi')}{L(\pi)}$ and the probability to itself is set to 1 minus the sum of probabilities to its children accordingly. For instance, in the example shown in Fig.~\ref{fig:tree}, the probability transiting from node $\langle 00 \rangle$ to node $\langle 001 \rangle$ is $\frac{12}{190}$, and the probability to itself is $1-\frac{12}{190}$.

\begin{remark}
Notice that the parameter $\epsilon$ in Eq.~\ref{eq:test} has an immediate effect on the compatible test and thus the learned model. One question would be how to determine the best $\epsilon$ for the best learning outcome. AALERGIA measures how good a learned model $\mathcal{D}_P$ is by the Bayesian Information Criteria (BIC) score. Intuitively, a learned model which generates the input system traces $\Pi_P$ with higher probability and has fewer number of states has a higher BIC score. AALERGIA thus automatically searches for the best $\epsilon$ that has the highest BIC score and output the corresponding model~\cite{AA}.  
\end{remark}

\mypara{GA-based learning algorithm} A genetic-algorithm based learning approach (GA in short hereafter) was proposed in~\cite{wang2017should}. The idea is to reduce the problem of model learning to an optimization problem, i.e., finding an optimal DTMC $\mathcal{D}_P$ that has the highest BIC score. GA encodes a chromosome by randomly merging the nodes in the prefix tree and apply mutations and crossover in the standard way to search for the best model. It has been shown such an approach could outperform AALERGIA sometimes~\cite{wang2017should}. We remark that different model learning approaches can be adopted in this work.

\subsection{Spuriousness Checking}
Given the learned DTMC model $\mathcal{D}_P$ and the property $\mathcal{P}_{\leq r}(\textbf{F}\varphi)$, we then use an existing probabilistic model checker (e.g., PRISM~\cite{kwiatkowska2011prism}) to check whether $\mathcal{D}_P \models \mathcal{P}_{\leq r}(\textbf{F}\varphi)$. There are two cases.

The first case is that $\mathcal{D}_P \models \mathcal{P}_{\leq r}(\textbf{F}\varphi)$ is not satisfied. Since this verification result is based on the learned model $\mathcal{D}_P$, it may not be correct according to $M$. Thus, we must validate this result based on the actual system. To do so, we construct a probabilistic counterexample and check whether the counterexample is spurious. In general, a probabilistic counterexample is in the form of a tree~\cite{cegeneration}. Since we assume that our property is in the form of $\mathcal{P}_{\leq r}(\textbf{F}\varphi)$, we define it to be a set of abstract paths of $\mathcal{D}_P$ for simplicity.\\
%Let $Path_{fin}(\varphi, \mathcal{D}_P)$ be the set of all finite paths in $\mathcal{D}_P$ that start from an initial state and satisfy $\varphi$. Let $\mathcal{P}(\pi, \mathcal{D}_P)$ where $\pi$ is a finite path of $\mathcal{D}_P$ be the probability of exhibiting $\pi$ in $\mathcal{D}_P$. Formally,
\begin{definition}
Given a DTMC $\mathcal{D}_P$ and a property $\mathcal{P}_{\leq r}(\textbf{F}\varphi)$, a counterexample is a set $C \subseteq \mathit{Path}_{fin}(\mathcal{D}_P)$ such that $\mathit{last}(\pi) \models \varphi$ for every path $\pi$ in $C$, and $\sum_{\pi \in C}\mathcal{P}(\pi, \mathcal{D}_P)>r$.
\end{definition}

\noindent Intuitively, a probabilistic counterexample for $\mathcal{P}_{\leq r}(\textbf{F}\varphi)$ is a set of finite paths in $\mathcal{D}_P$ whose accumulated probability is larger than $r$. There are existing approaches to construct such counterexamples~\cite{cegeneration}. In our setting, because $\mathcal{D}_P$ is learned based on a set of abstract traces, a probabilistic counterexample may be spurious. A probabilistic counterexample $C$ is spurious if and only if the probability measure of the concrete paths of $M$ corresponding to $C$ is less than $r$. Given an abstract path $\pi=\langle a_1, a_2, \cdots, a_n \rangle \in Path_{fin}(\mathcal{D}_P)$, we write $\gamma(\pi)$ to denote the set of concrete paths $\{\langle s_1, s_2, \cdots, s_n \rangle \in Path_{fin}(M) | \forall i.~\alpha_P(s_i) = a_i\}$ which become $\pi$ after predicate abstraction using $P$. We write $\gamma(C)$ to denote the set of corresponding concrete paths: $\{ \pi' | \exists \pi \in C,\ \pi' \in \gamma(\pi)\}$.\\
%Consider an abstract path $\sigma^{\sharp}=s_0^{\sharp}s_1^{\sharp}\cdots$. Let the concretion of $\sigma^{\sharp}$ be $\gamma(\sigma^{\sharp})$, which is a set of concrete paths $\gamma(\sigma^{\sharp})=\{\sigma |\sigma=s_0s_1\cdots\land s_i\in\db{s_i^\sharp}\}$. The concretion of $C$ is thus a set of paths $\gamma(C)=\{\sigma|\sigma\in\gamma(\sigma^{\sharp})\land\sigma^{\sharp}\in C)\}$. Formally,
\begin{definition}
Given the system $M$, a DTMC $\mathcal{D}_P$, a property $\mathcal{P}_{\leq r}(\textbf{F}\varphi)$ and a counterexample $C$ in terms of $\mathcal{D}_P$ and $\mathcal{P}_{\leq r}(\textbf{F}\varphi)$, $C$ is a spurious counterexample of $M$ if and only if $\sum_{\pi \in \gamma(C)} \mathcal{P}(\pi, M) \leq r$.
\end{definition}

\noindent Recall that $M$ can be viewed as a DTMC and $\mathcal{P}(\pi, M)$ is the probability of path $\pi$ in $M$. The above notion of spurious counterexample is a special case of that of probabilistic automaton defined in~\cite{probcegar}. Checking whether a probabilistic counterexample $C$ is spurious or not in our setting is however more challenging than that in the setting of~\cite{probcegar}. The reason is that we do not have a model of $M$ and thus the probability of a concrete path in $M$ cannot be calculated. Our solution is to adopt hypothesis testing~\cite{YounesThesis} to test whether the hypothesis $\sum_{\pi \in \gamma(C)}\mathcal{P}(\pi, M) > r$ holds given certain error bounds. If it does, we report the counterexample to the user together with the error bounds. Otherwise, we conclude that it is a spurious counterexample and proceed to the next step, i.e., abstraction refinement. %We remark that our approach here is not guaranteed to be correct result, but the probability of error happening is bounded.

In the following, we briefly introduce how hypothesis testing is used for spuriousness checking. %Our goal is to test the null hypothesis $H_0: \sum_{\pi \in \gamma(C)}\mathcal{P}(\pi, M) > r$ against the alternative hypothesis $H_1: \sum_{\pi \in \gamma(C)}\mathcal{P}(\pi, M) \leq r$.
Hypothesis testing is a statistical process to decide the truthfulness of two mutual exclusive statements. One is $H_0$: the null hypothesis that the counterexample is not spurious, i.e., $\sum_{\pi \in \gamma(C)}\mathcal{P}(\pi, M) > r$. The other is $H_1$: the alternative hypothesis that the counterexample is spurious, i.e., $\sum_{\pi \in \gamma(C)}\mathcal{P}(\pi, M) \leq r$. The probability of making an error is bounded by $(\alpha,\beta)$, such that the probability of a Type-\uppercase\expandafter{\romannumeral1} (respectively, a Type-\uppercase\expandafter{\romannumeral2}) error, which rejects $H_0$ (respectively, $H_1$) while $H_0$ (respectively, $H_1$) holds, is less or equal to $\alpha$ (respectively, $\beta$). The test needs to be relaxed with a confidence interval $(r-\delta,r+\delta)$, where neither hypothesis is rejected and the test continues to bound both types of errors~\cite{smcover}. In practice, the parameters (i.e., $(\alpha, \beta)$, and $\delta$) can often be decided by how much testing resources are available. In general, more resource is required for a smaller error bound.
% or a smaller confidence interval.%In addition,Two parameters are required from users. One is the target confidence level $\theta$ and the other is $\sigma$ which is required from user to control the indifference region. The indifference region refers to the region $(\theta - \sigma, \theta + \sigma)$, which is used to avoid exhaustive sampling and obtain the desired control over the precision~\cite{YounesThesis}. % The probability of accepting $H_1$ given that $H_0$ holds (i.e., false negative) is required to be bounded by $\beta_0$, and the probability of accepting $H_0$ if $H_1$ holds (i.e., false positive) should be bounded by $\beta_1$.
 % In general, it would require more sampling for a smaller error bounds or a smaller indifference region.

%Meanwhile, we aim to bound two types of errors $(\alpha,\beta)$. The Type-\rom{1} error (respectively, Type-\rom{2} error) is the probability of accepting $H_0$ (respectively, $H_1$) while $H_1$ (respectively, $H_0$) is true, called false positive (respectively, false negative).

Hypothesis testing works by keeping sampling traces from $M$ until a stopping condition is satisfied. There are two main methods to decide when sampling can be stopped. One is fixed-size sampling test, which requires running all the predefined number of tests. One difficulty of this approach is to find the appropriate number of tests to be performed such that the error bounds are valid\cite{smcover}. 
In this work, we adopt the alternative approach, which is sequential probability ratio test (SPRT). SPRT dynamically decides whether to reject or not a hypothesis every time we obtain a new trace, which yields a variable sample size. SPRT is usually faster than fixed-size sampling test as the testing process ends as soon as a conclusion is made. In the following, we briefly introduce how SPRT works and readers can refer to~\cite{smcover} for details. 

The basic idea of SPRT is to calculate the probability ratio each time after observing a trace and evaluate two stopping conditions~\cite{wald}. If either of the conditions is satisfied, the testing stops and returns which hypothesis is rejected. Let $n$ be the number of sampled traces so far, the decision is based on the following probability ratio:
\begin{equation}
\frac{p_{1n}}{p_{0n}}=\prod_{i=1}^n
\frac{P(B_i=b_i|p=p_1)}{P(B_i=b_i|p=p_0)}=
\frac{p_1^{d_n}(1-p_1)^{n-d_n}}{p_0^{d_n}(1-p_0)^{n-d_n}}
\end{equation}
, where $p_1=r+\delta$, $p_0=r-\delta$, $B_i$ is a random variable with boolean values representing whether a new sampled trace is in the counterexample (1 if yes and 0 otherwise), $d_n$ is the number of sampled trace which are in the counterexample. We will accept $H_0$ (reject $H_1$) if $\frac{p_{1n}}{p_{0n}}\ge \frac{1-\beta}{\alpha}$ and $H_1$ (reject $H_0$) if $\frac{p_{1n}}{p_{0n}}\le \frac{\beta}{1-\alpha}$. Otherwise, SPRT continues sampling until either $H_0$ or $H_1$ is rejected. 
 % In our approach, instead of sampling from scratch, we reuse the existing samples $\Pi$. Furthermore, all traces generated during hypothesis testing is added to $\Pi$ as well.
% start from a fixed point, which is the probability of $Pr(C|\Pi)$ based on the existing input system executions $\Pi$. If the system is not able to be sampled arbitrarily, we choose to accept $H_0$ if $Pr(C|\Pi)\geq r$ and accept $H_1$ if $Pr(C|\Pi)<r$. Otherwise, if $Pr(C|\Pi)$ drops into the indifferent region, the test will continue until we either accept $H_0$ or $H_1$. Notice that a path $\sigma$ in $C$ is finite~\cite{cegeneration}. Let $\sigma_m$ be the path with maximum length and let $l_m$ be its length. Hence, a test terminates within $l_m$ steps.
We remark that SPRT is guaranteed to terminate with probability 1~\cite{wald}.

%SPRT is guaranteed to terminate~\cite{wald}, while the expected sample size is hard to determine. A good approximation however is provided in~\cite{wald1945}.
%The expected sample size increases from $0$ to $p_1$ and decreases from $p_0$ to $1$. The worst case is when the ``true'' probability is within the indifference region.
%If $a=0.01, b=0.01, p_0=0.99$, and $p_1=0.98$, the expected sample size will be $3.0005\times 10^3$ by Wald's approximation. If considering the hypothesis testing parameters with high precision, which is normally the case in practice, e.g.,~$a=0.001, b=0.001, p_0=0.9999, p_1=0.9998,$ the expected sample size will be $6.8811\times 10^5$ in this case.

% \begin{example}
In our running example, hypothesis testing is done by repeatedly generating messages from some user in the network and observe the resultant traces.
Assume there are now a total of $n$ traces in $\Pi$ (including the ones obtained initially) and $b$ of them are in $\gamma(C)$. We can then calculate whether to accept $H_0$ or $H_1$ based on $n$ and $b$ in the standard way. If neither can be accepted, we continue sampling and updating $n$ and $b$ until one of the hypothesis is accepted. The following theorem follows the correctness of hypothesis testing immediately. \\
% \end{example}
\newtheorem{theorem}{Theorem}
\begin{theorem}
If LAR reports a counterexample, the probability of $M$ satisfying the property is `bounded' by $\beta$. \hfill $\Box$
\end{theorem}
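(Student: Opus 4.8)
The plan is to derive the statement directly from the correctness guarantee of SPRT recalled just above, after matching the event ``$M$ satisfies $\mathcal{P}_{\leq r}(\textbf{F}\varphi)$'' with the two hypotheses used during spuriousness checking.

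The first step is to observe that a counterexample $C$ is reported only when the sequential test accepts $H_0$, i.e.\ rejects the alternative $H_1:\sum_{\pi\in\gamma(C)}\mathcal{P}(\pi,M)\le r$. The second step is to show that $M\models\mathcal{P}_{\leq r}(\textbf{F}\varphi)$ entails that $H_1$ holds. By the definition of a counterexample, every path $\pi\in C$ satisfies $\mathit{last}(\pi)\models\varphi$; since the predicate set $P$ always contains the atoms appearing in $\varphi$ (this is how $P$ is initialized and refinement only adds predicates), abstraction preserves the truth of $\varphi$, so every concrete path in $\gamma(\pi)$ also ends in a state satisfying $\varphi$. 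Hence $\gamma(C)$ is contained in the set of finite paths of $M$ that satisfy $\textbf{F}\varphi$, and if $M\models\mathcal{P}_{\leq r}(\textbf{F}\varphi)$ the accumulated probability of \emph{all} $\varphi$-reaching paths is at most $r$; a fortiori $\sum_{\pi\in\gamma(C)}\mathcal{P}(\pi,M)\le r$, which is precisely $H_1$.

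The third step invokes Wald's analysis of SPRT~\cite{wald,smcover}: with indifference parameters $p_0=r-\delta$, $p_1=r+\delta$ and error levels $(\alpha,\beta)$, whenever the true value $p=\sum_{\pi\in\gamma(C)}\mathcal{P}(\pi,M)$ lies at or below $p_0$, the probability that the test accepts $H_0$ (a Type-II error) is at most $\beta$. Combining the three steps: conditioning on $M$ satisfying the property puts us in the regime where $H_1$ holds, so ``LAR reports a counterexample'' is a Type-II event, whose probability is at most $\beta$; this is the sense in which the probability of $M$ satisfying the property while a counterexample is reported is ``bounded'' by $\beta$.

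The main obstacle, and the reason for the quotation marks in the statement, is the direction of conditioning: SPRT bounds $\Pr(\text{report }C\mid H_1)$, whereas a literal reading of the theorem asks for $\Pr(M\models\text{property}\mid\text{report }C)$, and converting one into the other would require a prior over system models, which we do not have. A secondary gap is the indifference region: $M\models\mathcal{P}_{\leq r}(\textbf{F}\varphi)$ only forces $p\le r$ rather than $p\le r-\delta$, so the $\beta$ bound is formally exact only when $p$ falls outside the interval $(r-\delta,r]$. I would therefore write the proof so that it states the exact SPRT guarantee and then explains, rather than overclaims, the informal $\beta$-bound advertised in the theorem.
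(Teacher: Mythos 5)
Your proposal is correct and takes essentially the same route as the paper's (much terser) proof: a non-spurious counterexample implies the property is violated, so under $M\models\mathcal{P}_{\leq r}(\textbf{F}\varphi)$ the hypothesis $H_1$ holds and reporting $C$ is a Type-II event, bounded by $\beta$. Your extra steps (showing $\gamma(C)$ consists of $\varphi$-reaching paths, and the caveats about the direction of conditioning and the indifference region) only make explicit what the paper leaves implicit.
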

\newtheorem{proof}{Proof}
\begin{proof}
 Notice that if the counterexample is not spurious, then the property is violated by the system. So the probability that $M$ satisfies the property is no larger than the probability that the counterexample is actually spurious while a counterexample is reported, which is the Type-\uppercase\expandafter{\romannumeral2} error bounded by $\beta$.
\end{proof}
%%% to complete, needs further thoughts
\noindent In the case that the model checker confirms that $\mathcal{D}_P \models \mathcal{P}_{\leq r}(\textbf{F}\varphi)$, it is guaranteed that the the original system $M$ satisfies the property \emph{if an unbounded number of traces are used for learning.} \\
\begin{theorem}
If $\Pi$ has an unbounded number of traces, $\mathcal{D}_P \models \mathcal{P}_{\leq r}(\textbf{F}\varphi)$ with probability 1 only if $M$ satisfies the property. \hfill $\Box$
\end{theorem}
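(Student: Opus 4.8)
The plan is to prove the contrapositive: assuming $M\not\models\mathcal{P}_{\leq r}(\textbf{F}\varphi)$ --- i.e.\ the probability measure of the paths of $M$ (viewed as the DTMC $\mathcal{D}_M$) that ever reach a $\varphi$-state exceeds $r$ --- I will show that with probability $1$ the learned $D_P$ also violates the property once $|\Pi|$ is large enough. The first ingredient is \emph{soundness of the abstraction for} $\textbf{F}\varphi$. Because the atomic propositions occurring in $\varphi$ are placed in the predicate set $P$ from the start and refinement only ever adds predicates, the truth value of $\varphi$ in any concrete state $s$ is a function of $\alpha_P(s)$ alone; lifting this pointwise along a path, a concrete path reaches a $\varphi$-state exactly when its $\alpha_P$-image reaches the corresponding abstract $\varphi$-state. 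Since $\alpha_P$ is measurable and the distribution over abstract traces is precisely the pushforward under $\alpha_P$ of the distribution over concrete traces of $M$, the probability of $\textbf{F}\varphi$ in $M$ equals the probability of (the abstract version of) $\textbf{F}\varphi$ under the abstract-trace distribution, and likewise for every bounded-horizon variant $\textbf{F}^{\leq k}\varphi$ (``reach a $\varphi$-state within $k$ steps'').

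The second ingredient is \emph{learning convergence on finite horizons}. By the AALERGIA convergence guarantee recalled in Section~\ref{learning-algo} (\cite{AA,AAJ}), as $|\Pi|\to\infty$ the normalised labels converge (with probability $1$) to the true conditional next-symbol probabilities, while the compatibility test of Eq.~\ref{eq:test} merges two prefixes only when their continuation distributions are genuinely identical (its right-hand side tends to $0$) and never merges incompatible prefixes. Hence for each fixed $k$, with probability $1$, the distribution that $D_P$ assigns to abstract traces of length $\leq k$ converges to the true abstract-trace distribution restricted to length $\leq k$, so $\Pr_{D_P}(\textbf{F}^{\leq k}\varphi)\to\Pr_{M}(\textbf{F}^{\leq k}\varphi)$. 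Now I pass from bounded to unbounded horizon: since $\Pr_{M}(\textbf{F}^{\leq k}\varphi)\nearrow\Pr_{M}(\textbf{F}\varphi)>r$, I fix a $k$ with $\Pr_{M}(\textbf{F}^{\leq k}\varphi)>r$; by the convergence above, with probability $1$ we eventually have $\Pr_{D_P}(\textbf{F}^{\leq k}\varphi)>r$, hence $\Pr_{D_P}(\textbf{F}\varphi)\geq\Pr_{D_P}(\textbf{F}^{\leq k}\varphi)>r$, i.e.\ $D_P\not\models\mathcal{P}_{\leq r}(\textbf{F}\varphi)$, which is exactly the contrapositive.

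The step I expect to be the real obstacle is the convergence claim. The AALERGIA result is stated for targets that are genuinely DTMCs, whereas the projection of $\mathcal{D}_M$ through $\alpha_P$ is in general not Markovian (the lumping induced by $P$ need not be exact), so one cannot simply assert ``$D_P$ converges to the true abstract model''. The way around this is to use only finite-prefix convergence of the \emph{trace distribution}, not convergence to a finite model: in the limit the state-merging procedure either leaves a prefix unmerged --- in which case its normalised labels reproduce the true trace probabilities on that prefix --- or merges it only with a prefix having exactly the same continuation distribution, so no error is introduced on any fixed horizon. Making this rigorous, and checking that the BIC-driven search over $\epsilon$ cannot spoil it, is where the care is needed; the abstraction-soundness step and the bounded-to-unbounded passage are then routine.
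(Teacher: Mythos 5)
Your proof takes a genuinely different route from the paper's. The paper's own argument is a two-step sketch: (i) AALERGIA converges with probability 1 to the quotient automaton of $M$ under predicate abstraction, and (ii) by~\cite{probcegar} that quotient \emph{simulates} $M$, so a safety property holding of the quotient holds of $M$. You instead argue the contrapositive, replacing the simulation step by the observation that $\textbf{F}\varphi$ is an event measurable with respect to $\alpha_P$ (the atoms of $\varphi$ are in $P$ from the outset), so its probability is preserved \emph{exactly} under the pushforward, and replacing ``convergence to the quotient'' by finite-horizon convergence of the learned trace distribution together with the monotone passage from $\textbf{F}^{\leq k}\varphi$ to $\textbf{F}\varphi$. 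The paper's route has the advantage that it only needs an over-approximation --- the right shape of argument for a one-directional safety claim --- but it must then justify that the learned chain really is a simulating quotient, which is doubtful since AALERGIA's transition probabilities are empirical averages over concrete states and such an averaged lumping need not simulate $M$ when the lumping is inexact; your route sidesteps simulation entirely but in exchange must establish that the learned model's bounded-horizon reachability probabilities converge to the true ones. Both arguments ultimately lean on an AALERGIA convergence property that the cited theorem does not literally supply, because the $\alpha_P$-image of $\mathcal{D}_M$ is in general non-Markovian; you flag this obstacle explicitly and sketch a plausible workaround, whereas the paper silently identifies AALERGIA's limit with the simulating quotient. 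So your proposal is no less rigorous than the paper's proof --- it localizes the real gap more honestly --- but be aware that the deferred finite-horizon convergence lemma is the load-bearing step and remains to be proved.
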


\begin{proof}
The proof of the theorem is as follows. First, the AALERGIA algorithm eventually converges to the actual underlying DTMC~\cite{AA,AAJ} with probability 1, which is a quotient automaton of the original model $M$ (with respect to the predicate abstraction). Second, it is established in~\cite{probcegar} that the quotient automaton simulates the original system $M$. Thus, it is safe to conclude that $M$ satisfies the property if its quotient automaton does.
\end{proof}

However, the number of traces in $\Pi$ is often limited in practice and there is no guarantee that the learned model $\mathcal{D}_P$ has converged to a model which simulates $M$. Our remedy to this problem is that we present the model $\mathcal{D}_P$ as a part of the verification result. That is, we report that the property is satisfied by the system, provided that $\mathcal{D}_P$ is a correct model of the system. We remark that because there is no precise system model, our goal is not to completely verify the system. Rather, our goal is to provide evidence (a counterexample or a model) with confidence on why we believe that the verification result is sound. If we would like to have certain confidence on the ``correctness'' of the learned model $\mathcal{D}_P$, one way is to make sure that the number of traces in $\Pi$ satisfies certain constraints. We skip the details as this is beyond the scope of this work. A discussion on the number of samples required to guarantee that the difference between the transition probabilities of $\mathcal{D}_P$ and $M$ is given in Appendix A.

 % if we are able to obtain as many samples as we want, we may want to learn a model with `confidence' in the sense that the difference between the transition probabilities in the learned model and the actual system is bounded. A discussion on the number of samples required to satisfy such a bound is given in Appendix~\ref{app1}. 

\subsection{Refinement} \label{refinement}

If the counterexample $C$ is spurious after hypothesis testing, the abstract model $\mathcal{D}_P$ is to be refined so as to rule out the spurious counterexample. Following the idea of CEGAR~\cite{clarke2003counterexample,probcegar}, this is achieved by identifying a new predicate for predicate abstraction. Due to the lack of a system model, standard methods for generating new predicates (e.g., weakest pre-condition calculation~\cite{clarke2003counterexample} or interpolation~\cite{henzinger2004abstractions}) are infeasible in our setting. We thus adopt techniques from the machine learning community to solve the problem. %In the following, we explain our approach in detail.

\begin{figure}[t]
\centering
\scalebox{.8}{
\begin{tikzpicture}
	\tikzset{cstate/.style={circle,draw,thin,minimum size=0.001cm}}
	\node [state,minimum size=1.6cm] (s1) at (0,0)  {};
	\node [state,minimum size=1.6cm] (s2) at (3,0)  {};
	\node [state,minimum size=1.6cm] (s3) at (6,0)  {};
	
	\node[circle,fill,inner sep=2pt](a1) at (0.4,0.2) {};
	\node[circle,fill,inner sep=2pt](a2) at (0.25,-0.35) {};
	\node[circle,fill,inner sep=2pt](a3) at (0,0.5) {};
	\node[circle,fill,inner sep=2pt](a4) at (-0.2,-0.4) {};
	\node[circle,fill,inner sep=2pt](a5) at (-0.5, 0) {};
	
	\node[circle,fill,inner sep=2pt](b1) at (3.4,0.2) {};
	\node[circle,fill,inner sep=2pt](b2) at (3.25,-0.35) {};
	\node[circle,fill,inner sep=2pt](b3) at (3,0.5) {};
	\node[circle,fill,inner sep=2pt](b4) at (2.8,-0.4) {};
	\node[circle,fill,inner sep=2pt](b5) at (2.5, 0) {};
	
	\node[circle,fill,inner sep=2pt](c1) at (6.4,0.2) {};
	\node[circle,fill,inner sep=2pt](c2) at (6.25,-0.35) {};
	\node[circle,fill,inner sep=2pt](c3) at (6,0.5) {};
	\node[circle,fill,inner sep=2pt](c4) at (5.8,-0.4) {};
	\node[circle,fill,inner sep=2pt](c5) at (5.5, 0) {};
	
	\draw[->] (a1) -- (b3);
	\draw[->] (a1) -- (b5);
	\draw[->] (a2) -- (b4);
	\draw[->] (b3) -- (c3);
	\draw[->] (b1) -- (c5);
	\draw[->] (b2) -- (c4);

	% \draw[->] (s1) -- node [left,xshift=-0.5cm,yshift=0.7cm,rotate=90] {\small{\textit{abtraction}}} (s11);
\end{tikzpicture}}
\caption{Illustration of broken paths}
\label{fig:split}
\end{figure}
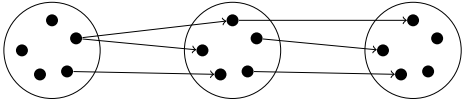

\mypara{Identify spurious transitions} The first step of identifying a new predicate is to identify spurious transitions in the counterexample. Fundamentally, spuriousness is the result of an inappropriate level of abstraction. Fig.~\ref{fig:split} shows how abstraction could induce more probability measure over abstract paths, due to broken paths~\cite{kroening2004counterexample}. The probability of paths in the counterexample $C$ may be inflated. Intuitively, if a probabilistic counterexample $C$ is spurious, there must be at least one path $\pi$ in $C$ whose probability is inflated, i.e., $\pi$ has certain probability $d$ in $\mathcal{D}_P$, whereas the accumulated probability of $\gamma(\pi)$ in $M$ is less than $d$. Furthermore, there must be at least one transition in $\pi$ whose probability in $\mathcal{D}_P$ is higher than the accumulated probability of the corresponding concrete transitions in $M$. The idea is then: if we are able to identify such an abstract path $\pi$ and such a transition in $\pi$, we could refine $\mathcal{D}_P$ such that the transition is no longer associated with the inflated probability so that $C$ may no longer be a counterexample. In the following, we define such transitions and later use them to find new predicates.

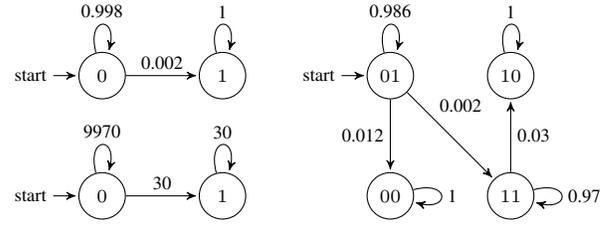
\begin{figure}[t]
	\centering
{\scriptsize \begin{tikzpicture}[>=stealth',shorten >=1pt,auto,node distance=1.6cm]
  \node[initial,state] (A0)      {$0$};
  \node[state]         (A1) [right of=A0]  {$1$};
  
  \node[initial,state] (B0) [below of=A0]     {$0$};
  \node[state]         (B1) [right of=B0] {$1$};

    \node[initial,state] (N0)  [right = of A1]    {$01$};
    \node[state]         (N1) [below of=N0]  {$00$};

    \node[state] (N2) [right of=N0]     {$10$};
    \node[state]         (N3) [right of=N1] {$11$};
	
  \path[->] (A0)  edge [loop above] node {0.998} (A0)
             edge              node {0.002} (A1)
			 
			(A1)  edge [loop above] node {1} (A1)
			            
        	(B0) edge [loop above]  node {9970} (B0)
             edge              node {30} (B1)
			 
			(B1)  edge [loop above] node {30} (B1)

			(N0)  edge [loop above] node {0.986} (N0)
				               edge              node [left] {0.012} (N1)
				  			 edge			   node [pos=0.3]{0.002} (N3)
			 
				  			(N1)  edge [loop right] node {1} (N1)
			            
				          	(N2) edge [loop above]  node {1} (N2)

				  			(N3)  edge [loop right] node {0.97} (N3)
				  				  edge 			node [right] {0.03} (N2);		           
\end{tikzpicture}}
\caption{Identify a most spurious transition}
\label{fig:spurious}
\end{figure}

Formally, let $C = \{\pi_1, \pi_2, \cdots, \pi_n\}$ be a spurious probabilistic counterexample. Let $\langle \mathrm{s}, \mathrm{s}'\rangle$ be a pair of consecutive states which form a transition of a path $\pi_i$ in $C$. Let $\gamma(\mathrm{s})$ denote the set of concrete states in $M$ which become $\mathrm{s}$ after abstraction, i.e., $\gamma(\mathrm{s}) = \{x \in \Sigma | \db{x}_P = \mathrm{s}\}$, and $|\gamma(\mathrm{s})|$ denote the number of concrete states in $\gamma(\mathrm{s})$.
We write ${\mathcal{P}}(\langle \mathrm{s}, \mathrm{s}'\rangle, M)$ to be $\sum_{s_0 \in \gamma(\mathrm{s}), s_0' \in \gamma(\mathrm{s'})}\#(\langle s_0, s_0' \rangle, M)/|\gamma(\mathrm{s})|$, i.e., the probability of having a corresponding concrete transition in $M$. $\langle \mathrm{s}, \mathrm{s'} \rangle$ is called a spurious transition if $\mathcal{P}(\langle \mathrm{s}, \mathrm{s'} \rangle, \mathcal{D}_P) > {\mathcal{P}}(\langle \mathrm{s}, \mathrm{s'}\rangle, M)$. Since we do not have the model $M$, it is impossible to compute ${\mathcal{P}}(\langle \mathrm{s}, \mathrm{s'}\rangle, M)$. Rather, we estimate it based on $\Pi$ (which now contains all traces sampled during the hypothesis testing in the previous step). That is, given the set of sample traces $\Pi$, we estimate ${\mathcal{P}}(\langle \mathrm{s}, \mathrm{s'}\rangle, M)$ by $\#\langle \mathrm{s}, \mathrm{s'} \rangle/\#\mathrm{s}$ where $\#\langle \mathrm{s}, \mathrm{s'}\rangle$ is the number of times the transitions take place and $\#\mathrm{s}$ is the number of times $\mathrm{s}$ is visited. Next, we identify the most spurious transition for refinement, i.e., the transition $\langle \mathrm{s}, \mathrm{s'}\rangle$ s.t.~$\mathcal{P}(\langle \mathrm{s}, \mathrm{s'} \rangle, \mathcal{D}_P) - \#\langle \mathrm{s}, \mathrm{s'}\rangle/\#\mathrm{s}$ is the largest of all transitions.

Algorithm~\ref{alg:idenst} gives a detailed description on how we identify the most spurious transitions. We first collect all the transitions in the learned model $\mathcal{D}_P$ at line 1. Then we iterate through all the transitions at line 4. For each transition, we apply Algorithm~\ref{alg:ct} to obtain $\#\mathrm{s}$ and $\#\langle \mathrm{s}, \mathrm{s'}\rangle$. The basic idea is to check all the traces in $\Pi$ and count how many times we observe an abstract state $\mathrm{s}$ and a transition $\langle \mathrm{s}, \mathrm{s'} \rangle$. We then calculate the difference between $\mathcal{P}(\langle \mathrm{s}, \mathrm{s'} \rangle, \mathcal{D}_P) - \#\langle \mathrm{s}, \mathrm{s'}\rangle/\#\mathrm{s}$ at line 7. After iterating through all the transitions, we sort them in descending order according to the difference at line 11 and return the sorted transitions at line 12.  

Fig.~\ref{fig:spurious} shows the first learned model $\mathcal{D}_P$ of our running example on the top-left, where there are two states 0 and 1 representing the state $observe0 \leq 1$ and $observe0 > 1$ respectively. There are in total three transitions: $\langle 0, 0\rangle$ , $\langle 0, 1\rangle$, $\langle 1, 1\rangle$. Based on the learned model, $\mathcal{P}(\langle 0, 0 \rangle, \mathcal{D}_P)$ is 0.998; $\mathcal{P}(\langle 0, 1 \rangle, \mathcal{D}_P)$ is 0.002 and $\mathcal{P}(\langle 1, 1 \rangle, \mathcal{D}_P)$ is 1. Assume that the estimation of ${\mathcal{P}}(\langle 0, 0\rangle, M)$ is 0.997; the estimation of ${\mathcal{P}}(\langle 0, 1\rangle, M)$ is 0.003; and the estimation of ${\mathcal{P}}(\langle 1, 1\rangle, M)$ is 1. By calculating the difference between the transition in the learned model and its estimation, we find that the transition $\langle 0,0 \rangle$ has the largest difference (i.e., the most spurious), which is used for abstraction refinement.

\begin{algorithm}[t]
    Collect all the transitions $\textit{Trans}$ in $\mathcal{D}_P$\;
    Let $P_{\textit{Diffs}}$ be the array storing the probability differences\; 
    Let $\textit{ATrans}$ be the set of analyzed transitions and initialized to be empty\;
    \While {$\textit{Trans}!=\emptyset$} {
        Randomly pick a transition $\textit{Tran}=\langle \mathrm{s},\mathrm{s'}\rangle$ from $\textit{Trans}$\;
        Apply Algorithm~\ref{alg:ct} $\textit{CountTran}(\textit{Tran},\Pi,P)$ to get $\# \mathrm{s}$ and $\# \langle \mathrm{s},\mathrm{s'} \rangle$ \;
        Calculate $P_{\textit{diff}}$ as the difference between $\# \langle \mathrm{s},\mathrm{s'} \rangle/\# \mathrm{s}$ and the transition probability of $\langle \mathrm{s},\mathrm{s'}\rangle$ in $\mathcal{D}_P$\;
        Add $P_{\textit{diff}}$ to $P_{\textit{Diffs}}$\;  
        Add $\textit{Tran}$ to $\textit{ATrans}$\;
        Remove $\textit{Tran}$ from $\textit{Trans}$\;
    }
    Sort $\textit{ATrans}$ according to $P_{\textit{Diffs}}$ in descending order\;
    \Return $\textit{ATrans}$
\caption{$\textit{IdentifyST}(\Pi,\mathcal{D}_P,P)$}
\label{alg:idenst}
\end{algorithm}

\begin{remark}
We remark that the identified spurious transitions may be different if two different counterexamples are provided. Note that we follow the approach in~\cite{cegeneration}, which calculates a smallest counterexample, i.e., a minimum set of paths to form a counterexample.
\end{remark}

\mypara{Labeled data collection} Next, we identify a new predicate based on the most spurious transition $\langle \mathrm{s}, \mathrm{s'} \rangle$. Intuitively, the reason that the transition is spurious is that many abstract paths in $\mathcal{D}_P$ going through this transition are infeasible. This is illustrated in Fig.~\ref{fig:split}, where a number of paths are `broken' at the middle state. Note that each abstract state in $\mathcal{D}_P$ groups a set of concrete states in $\Sigma$. Intuitively, $\langle \mathrm{s}, \mathrm{s'}\rangle$ is spurious because $\mathrm{s}$ groups some states which cannot transit to any concrete states in $\mathrm{s'}$, with states which can. As a result, the probability of transiting from $\mathrm{s}$ to $\mathrm{s'}$ is inflated. Thus, in order to prune the spurious counterexample, we need to refine the model such that $\mathrm{s}$ is split in a way such that the states which cannot transit to any concrete states in $\mathrm{s'}$ are separated from the rest. In the following, we first collect these two sets of states and identify a predicate for separating them using a classification algorithm.

\begin{algorithm}[t]

% Let $\# s$ be the number of occurrence of $s$ in $\Pi$\;
% Let $\# \langle s,s'\rangle$ be the number of occurrence of transition $\langle s,s'\rangle$ in $\Pi$\;
% Let $\gamma^+(s, \mathcal{D}_P, \Pi)$ be the concrete state of $s$ which transits to $s'$\;
% Let $\gamma^-(s, \mathcal{D}_P, \Pi)$ be the concrete state of $s$ which does not transit to $s'$\;

\For{every path $\pi\in\Pi$}{
	Abstract $\pi$ using $P$ to get the abstract path $\pi_P$\;
	Let $l$ be the length of $\pi$ \;
	\For{$s_i \in \pi_P[2\cdots l]$}{
		\If{$\pi_P[i-1]=\mathrm{s}$}{
			$\# \mathrm{s}=\# \mathrm{s}+1$\;
			\If{$\mathcal{D}_P(\pi_P[i-1],\pi_P[i])=\mathrm{s'}$}{
				$\# \mathrm{s'}=\# \mathrm{s'}+1$\;
				Add the corresponding concrete state of $\pi_P[i-1]$ to $\gamma^+(\mathrm{s}, \mathcal{D}_P, \Pi)$\;
			}
			\Else{
				Add the corresponding concrete state of $\pi_P[i-1]$ to $\gamma^-(\mathrm{s}, \mathcal{D}_P, \Pi)$\;
			}
		}
	}
}
\Return $\# \mathrm{s}$, $\# \langle \mathrm{s},\mathrm{s'}\rangle$, $\gamma^+(\mathrm{s}, \mathcal{D}_P, \Pi)$, $\gamma^-(\mathrm{s}, \mathcal{D}_P, \Pi)$
\caption{$\textit{CountTran}(\langle \mathrm{s},\mathrm{s'}\rangle,\Pi, P)$}
\label{alg:ct}
\end{algorithm}

Given any concrete execution $\pi = \langle s_1, s_2, \cdots, s_n \rangle$ in $\Pi$, it is clear that we can map any state $s_i$ in the sequence to an abstract state in $\mathcal{D}_P$. Let $\gamma(\mathrm{s}, \mathcal{D}_P, \Pi)$ be the set of concrete states in any execution in $\Pi$ which are mapped to state $\mathrm{s}$ in $\mathcal{D}_P$. We define $\gamma^+(\mathrm{s}, \mathcal{D}_P, \Pi) = \{ x | x \in \gamma(\mathrm{s}, \mathcal{D}_P, \Pi) \land \exists \langle \cdots, x, y, \cdots \rangle \in \Pi \land y \in \gamma(\mathrm{s'}, \mathcal{D}_P, \Pi)\}$, i.e., the set of concrete states in $\mathrm{s}$ which do transit to a concrete state in $\mathrm{s'}$ in one of the concrete trace. We define $\gamma^-(\mathrm{s}, \mathcal{D}_P, \Pi)$ to be $\gamma(\mathrm{s}, \mathcal{D}_P, \Pi) - \gamma^+(\mathrm{s}, \mathcal{D}_P, \Pi)$. % We remark that by our assumption of determinism, these two sets are disjoint.
%Given a splitting point $\eta_i:s_i^\sharp\to s_{i+1}^\sharp$, we collect data for classification in the concretion paths. For example, $S=s_1 s_2 s_3 s_4 s_5\cdots$ is a concrete path from hypothesis testing and $S^\sharp=s_1^\sharp s_1^\sharp s_2^\sharp s_2^\sharp s_3^\sharp\cdots$ is its abstract path. Assume $s_1^\sharp\to s_2^\sharp$ is the splitting point, a transition from $s_1^\sharp$ to $s_2^\sharp$ would lead to a positive label for the concrete state of $s_1^\sharp$. On the other hand, a transition from $s_1^\sharp$ to any other states besides $s_2^\sharp$ would lead to a negative label for the corresponding concrete state of $s_1^\sharp$.
Algorithm~\ref{alg:ct} shows how we collect the positive and negative data from system traces for a given spurious transition. For every trace in $\Pi$, we first abstract $\pi$ with $P$ at line 2. Starting from the second state in $\pi_P$ to the end, we iteratively check whether its previous state is $\mathrm{s}$ at line 5. If it is true, we then check whether the current state is $\mathrm{s'}$ at line 7. If yes, we add the current state to $\gamma^+(\mathrm{s}, \mathcal{D}_P, \Pi)$ at line 9; otherwise, we add it to $\gamma^-(\mathrm{s}, \mathcal{D}_P, \Pi)$ instead. Fig.~\ref{fig:collect} illustrates how the above works with our running example. Assume that at the top is a concrete trace and at the bottom is the corresponding abstract trace. Given that we identified previously that the spurious transition is $\langle 0,0\rangle$, the concrete states $s_1$, $s_2$ and $s_3$ are collected to the set $\gamma^+(\mathrm{s}, \mathcal{D}_P, \Pi)$ and state $s_4$ is collected into $\gamma^-(\mathrm{s}, \mathcal{D}_P, \Pi)$.

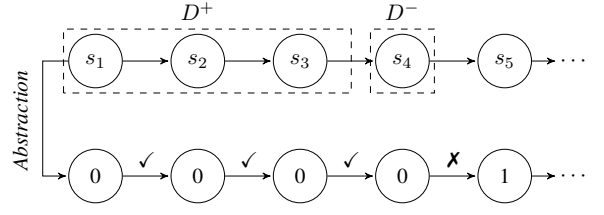
\begin{figure}[]
\centering
\scalebox{.85}{
\begin{tikzpicture}[>=stealth',shorten >=1pt,auto,every node/.style={minimum size=1.5em,inner sep=1}]
	\node [state] (s1) at (0,0)  {$s_1$};
	\node [state] (s2) at (1.6,0) {$s_2$};
	\node [state] (s3) at (3.2,0) {$s_3$};
	\node [state] (s4) at (4.8,0) {$s_4$};
	\node [state] (s5) at (6.4,0) {$s_5$};
	\node (s6) at (7.5,0) {$\cdots$};
	\node (l1) [above=0.1cm of s2] {$D^{+}$};
	\node (l1) [above=0.1cm of s4] {$D^{-}$};
	
	\node [state] (s11) at (0,-1.8)  {0};
	\node [state] (s21) at (1.6,-1.8) {0};
	\node [state] (s31) at (3.2,-1.8) {0};
	\node [state] (s41) at (4.8,-1.8) {0};
	\node [state] (s51) at (6.4,-1.8) {1};
	\node (s61) at (7.5,-1.8) {$\cdots$};
	
	\draw[->] (s1) -- (s2);
	\draw[->] (s2) -- (s3);
	\draw[->] (s3) -- (s4);
	\draw[->] (s4) -- (s5);
	\draw[->] (s5) -- (s6);
	
	\draw[->] (s11) -- node [above] {$\checkmark$} (s21);
	\draw[->] (s21) -- node [above] {$\checkmark$} (s31);
	\draw[->] (s31) -- node [above] {$\checkmark$} (s41);
	\draw[->] (s41) -- node [above] {$\xmark$} (s51);
	\draw[->] (s51) -- (s61);
	
	\draw[->] (s1.west) -- ++(-0.4cm,0) 
	  |- node [left,xshift=-0.35cm,yshift=1.7cm,rotate=90] {\textit{Abstraction}} (s11.west);
	  
	\draw[dashed] (-0.5,-0.5) rectangle (4,0.5);
	\draw[dashed] (4.3,-0.5) rectangle (5.3,0.5);

	% \draw[->] (s1) -- node [left,xshift=-0.5cm,yshift=0.7cm,rotate=90] {\small{\textit{abtraction}}} (s11);
\end{tikzpicture}}
\caption{Labeled data collection at spurious transition $\langle 0,0 \rangle$}
\label{fig:collect}
\end{figure}

\mypara{Generate a new predicate} Once we obtain the labeled data of the spurious transition, we then adopt a supervised classification technique to generate a new predicate for refinement. In particular, we use Support Vector Machine (SVM)~\cite{cortes1995support,chang2011libsvm} as it is reasonably scalable and produces a classifier in the form of a predicate. SVM is a supervised machine learning algorithm for classification and regression analysis. In this work, we use its binary classification functionality. Mathematically speaking, the binary classification functionality of (linear) SVM works as follows. Given the two sets of concrete states $\gamma^+(\mathrm{s}, \mathcal{D}_P, \Pi)$ and $\gamma^-(\mathrm{s}, \mathcal{D}_P, \Pi)$, SVM generates, if there is any, a linear classifier (hyperplane) in the form of $e = \sum_i c_i*x_i\geq c$ where $x_i \in V$ is a variable value and $c_i$ and $c$ are learned coefficients such that (1) for every state $x \in \gamma^+(\mathrm{s}, \mathcal{D}_P, \Pi)$, $\db{e}_x = 1$ and (2) for every state $x \in \gamma^-(\mathrm{s}, \mathcal{D}_P, \Pi)$, $\db{e}_x = 0$. As a simple example, for a classification task with only two features (variables), a linear classifier (hyperplane) is a line that linearly separates the two sets of data while providing a maximum margin between the two sets, which is shown in Fig.~\ref{fig:svm}.

\begin{figure}
\centering
\includegraphics[width=.35\textwidth]{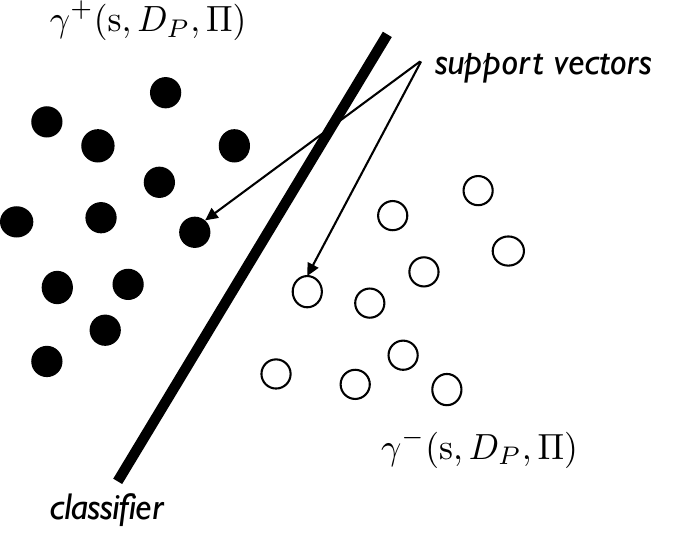}
\caption{A simple example of SVM with only two variables. The filled circles are the collected positive states and the empty circles are collected negative states.}
\label{fig:svm}
\end{figure}

% \red{Add an example of SVM here.}

% \begin{example}
%A simple example of SVM can be found in~\cite{ventura2009svm}, where we utilize the linear classifier.
In our running example, since transition $\langle 0, 0\rangle$ is the most spurious, we collect the two sets of concrete states in order to split state 0 in the DTMC shown on the left of Fig.~\ref{fig:spurious}.
Recall that a concrete state contains the valuation of 32 variables. We feed the two sets into SVM to generate a classifier. Notice that a boolean variable is mapped to 1 if its value is true and 0 otherwise. %Though the classifier is constituted by all 32 variables, 24 of them have coefficients less than 1E-4, which implies that they are irrelevant and thus can be pruned.
The result is a predicate constituted by all 32 variables. To simplify the predicate, we start from the variable with the largest coefficient and try to identify a classifier with a minimum number of variables~\cite{cortes1995support}, e.g., a predicate constituted by a minimum number of variables and yet is able to classify the two sets. The result is the predicate $new<runCount$. The predicate is then added into $P$, and used to refine the abstraction. The right part of Fig.~\ref{fig:spurious} shows the learned model after refinement, where the second bit of the state label captures the value of the new predicate. It can be observed that a new terminal state $00$ is introduced, which effectively reduces the probability of reaching $observe0> 1$. %Referring to the system, the valuation of $new$ and $runCount$ fails both predicates (state 00) when the system is terminating while $observe0$ is smaller than 1.
% \end{example}
% Though the predicate is constituted by all 32 variables, 24 of them have coefficients less than 1E-4, which implies that they are irrelevant and thus can be removed to obtain a simpler predicate. We refer the reader to~\cite{tool} for details on the relevant variables and their coefficients. % We remark the variables can be further removed through standard techniques like cross validation.
%%%since we have the space now, consider add the predicate in here. Simplify it so that it is easier to explain.

\begin{remark}
We remark that $\gamma^+(\mathrm{s}, \mathcal{D}_P, \Pi)$ and $\gamma^-(\mathrm{s}, \mathcal{D}_P, \Pi)$ may not be linearly separable, in which case linear SVM could not generate a satisfactory linear classifier (i.e., the obtained classifier has low classification accuracy). Although SVM could adopt different kinds of kernels (e.g., polynomial and RBF) to identify a polynomial or even more expressive classifier, we prefer linear classifiers as the learned classifier and model are presented as a part of the verification results and the easier it is, the easier it is for human understanding. In case a linear classifier for separating $\gamma^+(\mathrm{s}, \mathcal{D}_P, \Pi)$ and $\gamma^-(\mathrm{s}, \mathcal{D}_P, \Pi)$ does not exist, we move on to identify a predicate using the same approach based on the next most spurious transition.
\end{remark}

\subsection{Overall Algorithm}

\begin{algorithm}[t]
    let $P$ be the set of atomic propositions in $\varphi$\;
    \While {true} {
        construct abstract executions $\Pi_P$ based on $\Pi$ and $P$\;
        apply a model learning algorithm (e.g. AALERGIA with $\epsilon$ ranging from 1 to $\epsilon_{max}$) to learn a model $\mathcal{D}_P$ based on $\Pi_P$\;
        probabilistic model check $\mathcal{D}_P$ against $\varphi$\;
        \If {$\mathcal{D}_P \models \varphi$} {
            report $\varphi$ is verified, model $\mathcal{D}_P$\;
            \Return\;
        }
        construct a probabilistic counterexample $C$ using the approach in~\cite{cegeneration}\;
        run hypothesis testing on $C$ with error bounds $(\alpha,\beta)$ and indifference region $\delta$\;
        \If {$C$ is not spurious} {
            report $\varphi$ is violated and the probability of error is bounded by $(\alpha,\beta)$\;
            \Return\;
        }
        identify the most spurious transitions $\langle \mathrm{s}, \mathrm{s'}\rangle$ in $C$\;
        collect $\gamma^+(\mathrm{s}, \mathcal{D}_P, \Pi)$ and $\gamma^-(\mathrm{s}, \mathcal{D}_P, \Pi)$\;
        apply SVM to identify a predicate $p$ separating the two sets\;
        add $p$ into $P$\;
    }
\caption{$LAR(\Pi, \mathcal{P}_{\leq r}(\textbf{F} \varphi), \epsilon_{max}, \alpha, \beta, \delta)$}
\label{alg:polynomialSVM}
\end{algorithm}

The overall algorithm is shown as Algorithm~\ref{alg:polynomialSVM}. The inputs of the algorithms are a set of concrete executions $\Pi$, a property in the form of $\mathcal{P}_{\leq r}(\textbf{F} \varphi)$, and the parameters $(\alpha,\beta)$ and $\delta$ for hypothesis testing. During each iteration of the loop from line 2 to 17, we start with constructing a set of abstract traces based on $\Pi$ and a set of predicates $P$. Note that we set the initial set of predicates for abstraction to be the set of propositions in the property. Next, an abstract DTMC $\mathcal{D}_P$ is learned using a model learning algorithm. We then verify $\mathcal{D}_P$ against the property using PMC. If the property is verified, we conclude that the system is verified and present $\mathcal{D}_P$ as a part of the evidence. Otherwise, we construct a probabilistic counterexample $C$ (i.e., a set of abstract paths) at line 9. The spuriousness of $C$ is then checked through hypothesis testing at line 10. If it passes the test, it is returned as a counterexample. In addition, we output the bounded probability of $C$ being a spurious counterexample. Otherwise, at line 14, we identify the most spurious transition and obtain a new predicate at line 16. After adding the new predicate into $P$, we restart the process from line 2.

% \begin{example}
In our running example, after adding the new predicate, we learn the model as shown on the right of Fig.~\ref{fig:spurious}. Note that state 0 in the model on the left is split into two states 00 and 01. This is due to the new predicate. Note also that the spurious transition $\langle 0, 0 \rangle$ is split into two, which effectively reduces the probability of the counterexample. Verifying this new model against the property using PMC returns true and thus we successfully verify the system without requiring a system model as input.
% \end{example}
\begin{remark}
It is hard to analyze the complexity of LAR as it depends on several factors, i.e., the initial set of predicates $P$, the complexity of the model learning algorithm, the complexity of hypothesis testing, the complexity of SVM, and whether there is a linear classifier, etc. We thus rely on empirical studies to evaluate the efficiency and effectiveness of the algorithm in the next section. %In addition, we also empirically study the quality of the learned model in the case that a property is verified.
%We remark that we do not provide any guarantee on the soundness of the model when a property is verified, though it could be statistically validated through hypothesis testing. Rather it is reported as a part of the verification results which can be validated separately.\\
%% jingyi: this paragraph easily directs reviewers to the problem of model quality
Note that if SVM fails to find a classifier for all the spurious transitions, Algorithm~\ref{alg:polynomialSVM} terminates and reports the verification is unsuccessful. Otherwise, it either reports true with a supporting model or a counterexample.
\end{remark}

% \noindent\textbf{Discussion} Our algorithm heavily relies on the assumption that the model learning algorithm is able to learn a correct model reflecting the actual system given the system logs. Such an assumption is proved theoretically in~\cite{AA,AAJ} and investigated empirically in~\cite{wang2017should}. In general, the learned model would be more accurate given more samples (longer learning time).
% % We also have some empirical evaluation on this assumption in Section~\ref{evaluation}.
% In addition, for each iteration, LAR identifies a new predicate which effectively eliminates some probability measure of a spurious counterexample until either an actual counterexample is reported (with bounded error) or the property is verified, where LAR converges.

\section{Evaluation} \label{evaluation}
Our approach has been implemented as a self-contained toolkit named LAR (available at GitHub~\cite{tool})
with about 6K lines of Java code. LAR relies on the LIBSVM library~\cite{chang2011libsvm,abeel2009java} for generating new predicates and PRISM model checker for PMC~\cite{kwiatkowska2011prism}. All models and detailed results discussed below are available at~\cite{results}. 
% In this section, we evaluate LAR to answer the following research questions (RQ). \\

\subsubsection*{Experiment settings}

We identify and compare our approach with two latest state-of-the-art approaches developed based on stochastic regular grammatical inference. One is the AALERGIA algorithm (hereafter AA)~\cite{AA,AAJ}. The other is the GA-based approach (hereafter GA) in~\cite{wang2017should}.

%First, experiments are conducted to see whether LAR could verify properties effectively and efficiently. We compare LAR with the approaches in~\cite{AA,AAJ,wang2017should}, i.e., to learn a model (using the AALERGIA algorithm and GA respectively) first and then verify the property based on the learned models. Second, we test on our assumption that the learned model reflects the actual system well with respect to the property.

Our test subjects include DTMC models from the PRISM benchmark suite~\cite{KNP12b}, as well as a real-world water treatment system (SWaT)~\cite{swat}. % Note that if a benchmark model contains non-deterministic or probabilistic random choices, we manually add auxiliary variables in a way such that the probability distribution is not altered.
Table~\ref{tb:cases} summarizes the system configuration and the property to verify of the test subjects, where the detailed description of the PRISM benchmark systems can be found at~\cite{prism_models}. SWaT~\cite{swat} is a complicated system which involves a series of water treatment processes like ultra-filtration, chemical dosing, dechlorination through an ultraviolet system, etc. We regard SWaT as a representative of our target complex systems. 
% Note that each model has multiple configurable parameters. 
We conduct our experiments on a simulator of the SWaT due to safety concerns. The simulator contains an exact Python translation (about 3K LOC) of the control software in SWaT and a set of Ordinary Differential Equations (ODEs) for simulating the environment. The property we verify is how likely the raw water tank in the system would go underflow. For each model, a set of system traces are first obtained through random simulation. For a fair comparison, we use the same set of system traces to learn from for all the algorithms. The total length of the traces for all the systems is set to be 20000. 
% The parameters for the hypothesis testing are set as: $\alpha=0.05,\beta=0.05,\delta=0.05$. The range of $\epsilon$ for AA is 1 to 64.
To validate the verification results, we need to know the actual results (last column of Table~\ref{tb:cases}). For each benchmark in the PRISM benchmark suite~\cite{KNP12b}, we apply PRISM on their corresponding PRISM model to obtain the actual probability. Since we do not have the actual model of SWaT, we estimate the verification result based on a large number of traces (e.g., one week's simulation). For each model, LAR is set to verify a property of the form $\mathcal{P}_{\leq r}(\textbf{F} \varphi)$. In our experiment, $r$ is set to be 20\% above the actual $\mathcal{P}(\textbf{F} \varphi)$. 
% We repeat the experiments 10 times and take the average to avoid the effect of randomness.
% In most cases we successfully verify the property instead of reporting a counterexample since the safety threshold is higher than the actual result.

\begin{table}[t]
	\centering
	\caption{Case studies under evaluation. Column `Config' is the system configuration; column `Property' is the property to verify; column `\#states' is the state space of the original system; column `$P_{act}$' is the actual probability of the property being satisfied.} 
	% }
	%\begin{adjustbox}{max width=\textwidth}
	\begin{tabular}{cccccc}
	\toprule
	Case study 	& \multirow{2}{*}{Config} &\multirow{2}{*}{Property} &\multirow{2}{*}{\#states} &\multirow{2}{*}{$P_{act}$} \\
	(parameters) & & & & \\
	\midrule
	\multirow{4}{*}{\specialcell{$crowds$\\(R,S)}} & 5,5 & \multirow{4}{*}{Positive} & 8k & 0.1458 \\
	 & 5,10 & & 111k & 0.1048 \\
	 & 5,15 & & 592k & 0.0922 \\
	 & 5,20 & & 2062k  & 0.0861 \\
	\midrule
	\multirow{4}{*}{\specialcell{$egl$\\(L,N)}} & \multirow{2}{*}{2,5} & Unfair for A & \multirow{2}{*}{29k} & 0.5156 \\
	& & Unfair for B & & 0.4844 \\ 
	& \multirow{2}{*}{2,10} & Unfair for A & \multirow{2}{*}{6E5k} & 0.5005 \\
	& & Unfair for B & & 0.4995 \\
	\midrule
	\multirow{4}{*}{\specialcell{$nand$\\(N,K)}} & 20,2 & \multirow{4}{*}{Reliable} & 155k & 0.4129 \\ %
	& 20,3 & & 232k & 0.4685 \\
	& 60,1 & & 4717k & 0.2695 \\
	& 60,3 & & 1.4E5k & 0.6377 \\
	\midrule
	\multirow{3}{*}{\specialcell{$SWaT$\\(S,R)}} & \multirow{1}{*}{5,1} & \multirow{3}{*}{Underflow} & \multirow{1}{*}{$\infty$} & \multirow{1}{*}{0.1713} \\

	 & \multirow{1}{*}{5,5} &  & \multirow{1}{*}{$\infty$} & \multirow{1}{*}{0.1389} \\

	 & \multirow{1}{*}{10,5} &  & \multirow{1}{*}{$\infty$} & \multirow{1}{*}{0.3333} \\
	\bottomrule	
	\end{tabular}
	%\end{adjustbox}
	\label{tb:cases}
\end{table}

\begin{table}[t]
\centering
\caption{Parameters for each algorithm.}
\begin{tabular}{@{}ccc@{}}
\toprule
Algorithm & Parameter           & Value \\ \midrule
AA        & $\epsilon_{max}$             & 64    \\
\midrule
\multirow{4}{*}{GA}       & number of chromosome &      50  \\
          & maximum generation  &       10 \\ 
          & mutation rate       &   0.1    \\
          & selection probability&     0.9  \\
          
\midrule
\multirow{4}{*}{LAR}       & $\alpha$               & 0.05  \\
          & $\beta$                & 0.05  \\
          & $\sigma$               & 0.05  \\
          & $\epsilon_{max}$             & 64     \\ 
          & $\varsigma$ & 0.8 \\
          \bottomrule
\end{tabular}
\label{tb:para}
\end{table}

Table~\ref{tb:para} shows the details of the parameters we used in our experiments for AA, GA and LAR respectively. For AA, there is only one parameter which is the maximum value of $\epsilon$ to choose from\footnote{Note that the minimum value is 1 as required for convergence}. For GA, there are 4 parameters as follows. We set the number of chromosomes in each generation to be 50. The mutation rate for the mutation operator is set to be 0.1. We use the tournament selection strategy to select the good chromosomes in each generation and the probability to keep the winner is set to be 0.9. Finally, the maximum number of generation for the evolution is set to be 10. For LAR, the parameters are mainly for hypothesis testing, the maximum value of $\epsilon$ for AA, and the minimum classification accuracy for SVM. In particular, the error bounds for hypothesis testing are set as $\alpha=0.05,\beta=0.05$. The indifference region is set to be $\sigma=0.05$. The maximum value of $\epsilon$ to choose from is set to be 64 which is the same as AA. The minimum required accuracy for SVM classification $\varsigma$ is set to be 0.8. All the experiments were conducted on an OS X machine with 2.6GHz quad-core Intel Core i7 processor and 8 GB RAM.
% All the experiments are repeated 5 times to avoid the influence of randomness. 
We aim to answer the following research questions through our experiments.

\subsubsection*{RQ1: Is LAR better in verifying systems than state-of-the-art approaches based on learning probabilistic models?}

The results are summarized in Table~\ref{tb:exp}, where - means no results after the preset timeout 2 hours. We show the results of applying AA and GA on the basic abstraction which is defined by whether a state satisfies the target property or not in column `AA Basic' and `GA Basic' respectively. We can observe that in all cases the verification results on the learned models are 1 due to overly coarse abstraction and inappropriate state merging. 
Column `AA' shows the results of the AALERGIA algorithm and column `GA' shows the results of the GA-based approach. Column `LAR' shows our results. We leave the explanation of column `AA-LAR' later. Note that for $crowds$ and $nand$ protocol, we have to do manual abstraction first for AA and GA, i.e., heuristically select a smallest number of variables which gives us reasonable results, otherwise neither AA nor GA can learn a model within the time limit. This means that LAR is competing with AA and GA combining some manual abstraction. We remark that this is precisely the advantage of LAR which automatically identifies a level of abstraction which verifies the property.   

% Please add the following required packages to your document preamble:
% \usepackage{booktabs}
\begin{table*}[t]
\centering
\caption{Verification results on learned models using different learning approaches.}
\label{tb:exp}
% \begin{adjustbox}{max width=.6\textwidth}
\begin{tabular}{@{}ccc|cc|cc|cc|cc|ccc|cc@{}}
\toprule
\multirow{2}{*}{System} & \multirow{2}{*}{\#Config} & \multirow{2}{*}{$P_{act}$}  & \multicolumn{2}{c}{AA Basic}   &  \multicolumn{2}{c}{AA}     &  \multicolumn{2}{c}{GA Basic} &  \multicolumn{2}{c}{GA}    &  \multicolumn{3}{c}{LAR}    &  \multicolumn{2}{c}{AA-LAR} \\ \cline{4-16}
&&& \#states & result & \#states & result & \#states & result & \#states & result & iters & \#states & result & \#states & result \\
\midrule
\multirow{4}{*}{$crowds$} & 1      & 0.1458 &    4      & 1 & 10572 & 1 & 3 &     1     & 102 & 1 & 2 & 6 & 0.1549 & 5 & 0.156  \\
 & 2      & 0.1048 &     4     & 1 & 10894 & 1 & 3 &     1     & 158 & 0.1054 & 2 & 6 & 0.101 & 5 & 0.114  \\
 & 3      & 0.0922 &     4    & 1 & - & -      & 3 &     1     & 206 & 1 & 2 & 6 & 0.066 & 5 & 0.082  \\
 & 4      & 0.0861 &     4     & 1 & - & -      & 3 &     1     & 266 & 1 & 2 & 6 & 0.0877 & 5 & 0.084  \\
 \midrule
\multirow{4}{*}{$egl$}    & \multirow{2}{*}{1}      & 0.5156 &    \multirow{2}{*}{4}      &  \multirow{2}{*}{1} & \multirow{2}{*}{-} & \multirow{2}{*}{-}      & \multirow{2}{*}{3} &  \multirow{2}{*}{1}        & \multirow{2}{*}{-} & \multirow{2}{*}{-}      & \multirow{2}{*}{1} & \multirow{2}{*}{34} & 0.4961 &  \multirow{2}{*}{55} & 0.4834 \\
    &       & 0.4844 &          &  &  &       &  &          &  &       &  &  & 0.5039 & & 0.5166 \\
    & \multirow{2}{*}{2}      & 0.5005 &    \multirow{2}{*}{4}      & \multirow{2}{*}{1} & \multirow{2}{*}{-} & \multirow{2}{*}{-}      & \multirow{2}{*}{3} & \multirow{2}{*}{1}         & \multirow{2}{*}{-} & \multirow{2}{*}{-}      & \multirow{2}{*}{1} & \multirow{2}{*}{4} & 0.4619 & \multirow{2}{*}{107} & 0.4888 \\
    &       & 0.4995 &          &  &  &       &  &          &  &       &  &  & 0.5381 &  & 0.5112 \\
    \midrule
\multirow{4}{*}{$nand$}   & 1      & 0.4129 &     4     & 1 & 1257 & 0.4878 & 3 &     1     & 90 & 0.4999   & 3 & 8 & 0.3844 & 7 & 0.3849 \\
   & 2      & 0.4685 &     4     & 1 & 1793 & 0.4995 & 3 &    1      & 90 & 0.4995 & 3 & 8 & 0.5332 & 7 & 0.43 \\
   & 3      & 0.2695 &     4     & 1 & 14987 & 0.1428 & 3 &   1       & 250 & 0.1707 & 2 & 6 & 0.1458 & 5 & 0.2286 \\
   & 4      & 0.6377 &     4     & 1 & - & -      & 3 &     1     & - & -      & 4 & 11 & 0.606   & 9 & 0.6      \\
   \midrule
\multirow{4}{*}{$SWaT$}   & 1      & 0.1713 &     101     & 1 & - & -      & 3 &    1      & - & -      & 7 & 274 & 0.2 & 852 & 0.2 \\
   & 2      & 0.1389 &    186      & 1 & - & -      & 2 &    1      & - & -      & 12 & 930 & 0.1408 & 4139 & 0.189   \\
   & 3      & 0.3333 &    255      & 1 & - & -      & 3 &    1      & - & -      & 20 & 974 & 0.2268 & 4226 & 0.2999 \\ \bottomrule
\end{tabular}
% \end{adjustbox}
\end{table*}

\begin{table*}[t]
	\centering
	\caption{Experiment results of LAR: given different safety threshold in the property%, e.g. LAR-40\% means the safety threshold is set to be 40\% above the actual result.
	% Make this a new table to 80\% more $r$, or 200\% more $r$; only show results for LAR
	} %for each SWaT setting, the first and second line under LAR shows the results given safety threshold 20\% and 80\% above `act' respectively.}
	%\begin{adjustbox}{max width=\textwidth}
	\begin{tabular}{ccc|cccc|cccc|cccc}
	\toprule
	\multirow{2}{*}{System} & \multirow{2}{*}{\#Config} & \multirow{2}{*}{$P_{act}$} & \multicolumn{4}{c}{LAR-$r_2$}  & \multicolumn{4}{c}{LAR-$r_3$} & \multicolumn{4}{c}{LAR-$r_4$} \\ 
	\cline{4-15}
	 & & & \#states & iters & time & result & \#states & iters & time & result & \#states & iters & time & result\\
	\midrule
	\multirow{4}{*}{$crowds$} 
		& 1 & 0.1458 & 6 & 2 & 109 & 0.1723 & 6 & 2 & 104 & 0.1599 & 6 & 2 & 96  & 0.1794 \\
	       & 2 & 0.1048 & 6 & 2 & 119 & 0.1044 & 6 & 2 & 119 & 0.1051 & 6 & 2 & 108 & 0.1051 \\
	       & 3 & 0.0922 & 6 & 2 & 114 & 0.063  & 6 & 2 & 110 & 0.0594 & 6 & 2 & 104 & 0.0561 \\
	       & 4 & 0.0861 & 6 & 2 & 116 & 0.0845 & 6 & 2 & 115 & 0.0813 & 6 & 2 & 121 & 0.0848 \\
	\midrule
	\multirow{4}{*}{$egl$} & \multirow{2}{*}{1} & 0.5156 & \multirow{2}{*}{34} & \multirow{2}{*}{1} & \multirow{2}{*}{5} &  0.4961 & \multirow{2}{*}{34} & \multirow{2}{*}{1} & \multirow{2}{*}{5} & 0.4961 & \multirow{2}{*}{34} & \multirow{2}{*}{1} & \multirow{2}{*}{5} & 0.4961\\
	& & 0.4844 & & &  & 0.5039 & & & & 0.5039 & & & & 0.5039\\ 
	% \cline{4-15}
	& \multirow{2}{*}{2} & 0.5005 & \multirow{2}{*}{4} & \multirow{2}{*}{1} & \multirow{2}{*}{6} &  0.4619 & \multirow{2}{*}{4} & \multirow{2}{*}{1} & \multirow{2}{*}{6} & 0.4619 & \multirow{2}{*}{4} & \multirow{2}{*}{1} & \multirow{2}{*}{6} & 0.4619\\
	& & 0.4844 & & &  & 0.5381 & & & & 0.5381 & & & & 0.5381\\
	\midrule
	\multirow{4}{*}{$nand$} & 1 & 0.4129 & 8  & 3 & 119 & 0.4532 & 8  & 3  & 95 & 0.3674 & 8  & 3  & 83 & 0.349  \\
	     & 2 & 0.4685 & 8  & 3 & 0.6 & 0.4912 & 8  & 3  & 88 & 0.5098 & 8  & 3  & 60 & 0.4997 \\
	     & 3 & 0.2695 & 6  & 2 & 100 & 0.1276 & 6  & 2  & 80 & 0.1333 & 6  & 2  & 76 & 0.1818 \\
	     & 4 & 0.6377 & 11 & 4 & 166 & 0.381  & -- & -- & -- & --     & -- & -- & -- & --     \\
	\midrule
	\multirow{3}{*}{$SWaT$} & 1 & 0.1713 & 274 & 7  & 48  & 0.2    & 204 & 6  & 40  & 0.375  & 204 & 6  & 41  & 0.375  \\
	     & 2 & 0.1389 & 693 & 11 & 121 & 0.2289 & 542 & 9  & 84  & 0.3932 & 532 & 7  & 56  & 0.4226 \\
	     & 3 & 0.3333 & 910 & 17 & 719 & 0.4475 & 910 & 17 & 738 & 0.4475 & 910 & 17 & 723 & 0.4475 \\
	\bottomrule	
	\end{tabular}
	%\end{adjustbox}
	\label{tb:exp2}
\end{table*}

It can be observed that LAR successfully verified all systems, whereas both AA and GA failed on about half the systems. In most cases especially for the benchmark systems, LAR only takes few iterations to identify a model which allows us to prove the property. This is particularly useful in verifying systems with a large number of (or infinite) states. For instance, for the $egl$ model, while AA (or GA) fails to learn or verify any of the 4 cases, LAR successfully verifies all of them in about 5 seconds and the learned models have 4 states only. The latter two $crowds$ cases show similar results. This is especially the case for the SWaT system. %The proposition $\varphi$ here is $waterlevel\leq 250$.
Because the variables in this system are all float numbers, without abstraction, every logged state is different from others and as a result, there is no generalization (i.e., no state merging in AA and GA) and no models could be learned.
% and as a consequence there are lots of time-consuming compatibility tests.
% As a result, both AA and GA fail to terminate within timeout.
On the other hand, LAR is able to learn reasonably small models by automatic abstraction refinement.% and rather accurate results compared to the estimated probability.
%The state space is growing slowly by the number of iterations. On the other hand, the model size that AA learns is much larger. In fact, AA is already adjusted to learn only from variabels relavant to the property. We remark that this doesn't always provide the right level of abstraction and more variables have to be added heuristically, while LAR runs with full automation.

Efficiency-wise, Fig.~\ref{fig:learn-time} shows the time cost of AA, GA and LAR for different systems. We can observe that LAR takes much less time than AA and GA (given the same number of traces). In over half of the cases neither AA nor GA can learn a model within the time limit, whereas LAR can learn an accurate model within minutes. This is largely because a much smaller DTMC is learned, since the abstract traces only have a few symbols (up to $2^{\#P}$ where $\#P$ is the number of predicates in $P$) in the alphabet. Furthermore, because the learned model is small, model checking whether the model satisfies the property through PMC takes much less time. Fig.~\ref{fig:lar-time} shows the time distribution of each step in LAR, which include learning the model (Learning), PMC on the learned model (PMC), hypothesis testing of counterexample (Hypothesis testing) and abstraction refinement (Refinement) to generate a new predicate. A close look reveals that learning, PMC and spuriousness checking through hypothesis testing dominates LAR's time (about 80\% on average), while abstraction refinement takes very little time, i.e., mostly less than 10\%. Fig.~\ref{fig:iter-time} shows the time cost of LAR in each iteration. It can be observed that in most cases the time cost grows slowly by iteration. The only exception is the last iteration of swat-2. We observe that there is a significant increase in the number of states in the learned model in the last iteration, which takes a relatively long time to model check. We thus have the following answer to RQ1.
\begin{framed}
\noindent \emph{Answer to RQ1: LAR is more effective and efficient in verifying complex systems than state-of-the-art approaches based on learning probabilistic models by automatically identifying a right level of abstraction for learning.
}\end{framed}

% We acknowledge that if multiple iterations of abstract refinement is necessary (e.g., in the case of $nand$ and SWaT) or simulation is expensive (e.g., in the case of SWaT), LAR may take a relatively long time. \\

\subsubsection*{RQ2: Is LAR able to learn models better than those learned by state-of-the-art approaches?} 

Recall that if LAR reports that the property is verified, a model is returned. It is thus important that the learned models are of high-quality. In the following, we evaluate the quality of the learned models in three aspects. 

First of all, it can be observed from Table~\ref{tb:exp} that {\it the models generated by LAR have much fewer states than AA or GA, i.e., often order of magnitudes fewer than the model learned by AA.} The reason is that we always start with learning based on the coarsest abstraction and only add one predicate for abstraction if necessary. We remark that learning small models is important if the models are to be reviewed by experts or used to implement runtime monitors as they are easier to interpret. For instance, for the example $crowds$ protocol, the learned new predicate is $new<runCount$, where $runCount$ is a variable representing how many runs are left and $new$ represents whether it is about to start a new run. Intuitively, this predicate allows us to separate the last run from the other runs, which is relevant because if $observer0$ is 0 after the second last run, it is impossible to reach a state satisfying $observe0 > 1$. With the learned predicate, we can easily interpret the learned system models and implement a runtime monitor monitoring how the system evolves afterwards. Secondly, to evaluate the accuracy of the learned models, we apply PMC on the model learned by LAR to compute the probability $\mathcal{P}(\textbf{F} \varphi)$ (although LAR's verification results are either true or false). It can be observed that {\it the models produced by LAR often (i.e., in 14 out of 15 cases) have more or equally precise verification results than those models learned by both AA and GA.} 
Lastly, we measure how accurate LAR's models are, compared to `best' model which can be learned at the learned abstract level. To do that we generate a large number of traces (5 times more than those used in LAR), abstract them with the same set of predicates learned by LAR, and apply AA to learn a model. Given that AA is guaranteed to converge~\cite{AA,AAJ}, we assume these models to be the accurate models at this abstract level. We then compare them with LAR's models.
%We present the results that AA learns from relatively large number (5 times) of abstract traces based on the abstraction level (reflected by the predicate set) that LAR produces in column `AA-LAR'.
Column `AA-LAR' shows details related to these accurate models learned by AA, i.e., the number of states and the result of $\mathcal{P}(\textbf{F} \varphi)$ based on the models. It can be observed that the models learned by LAR are rather close to these best models. It thus suggests that {\it not only LAR has found an abstract level to prove the property but also it is able to learn accurate models at the chosen abstract level.} We thus have the following answer to RQ2.
\begin{framed}
\noindent \emph{Answer to RQ2: LAR is able to learn more accurate and much smaller models (at the identified abstraction level) than state-of-the-art approaches.}
\end{framed}
% close to the actual results, which means that LAR is able to identify the right level of abstraction for verifying given properties. This in turn supports the fact that AA converges to the actual system model in these cases, because the results would deviate significantly otherwise. We remark that it's hard to decide how many traces we would need to guarantee convergence as evidenced in~\cite{wang2017should}. However, in general, if we run LAR with more traces, we can obtain more precise models.

\begin{figure}[t]
\centering
\includegraphics[width=.5\textwidth]{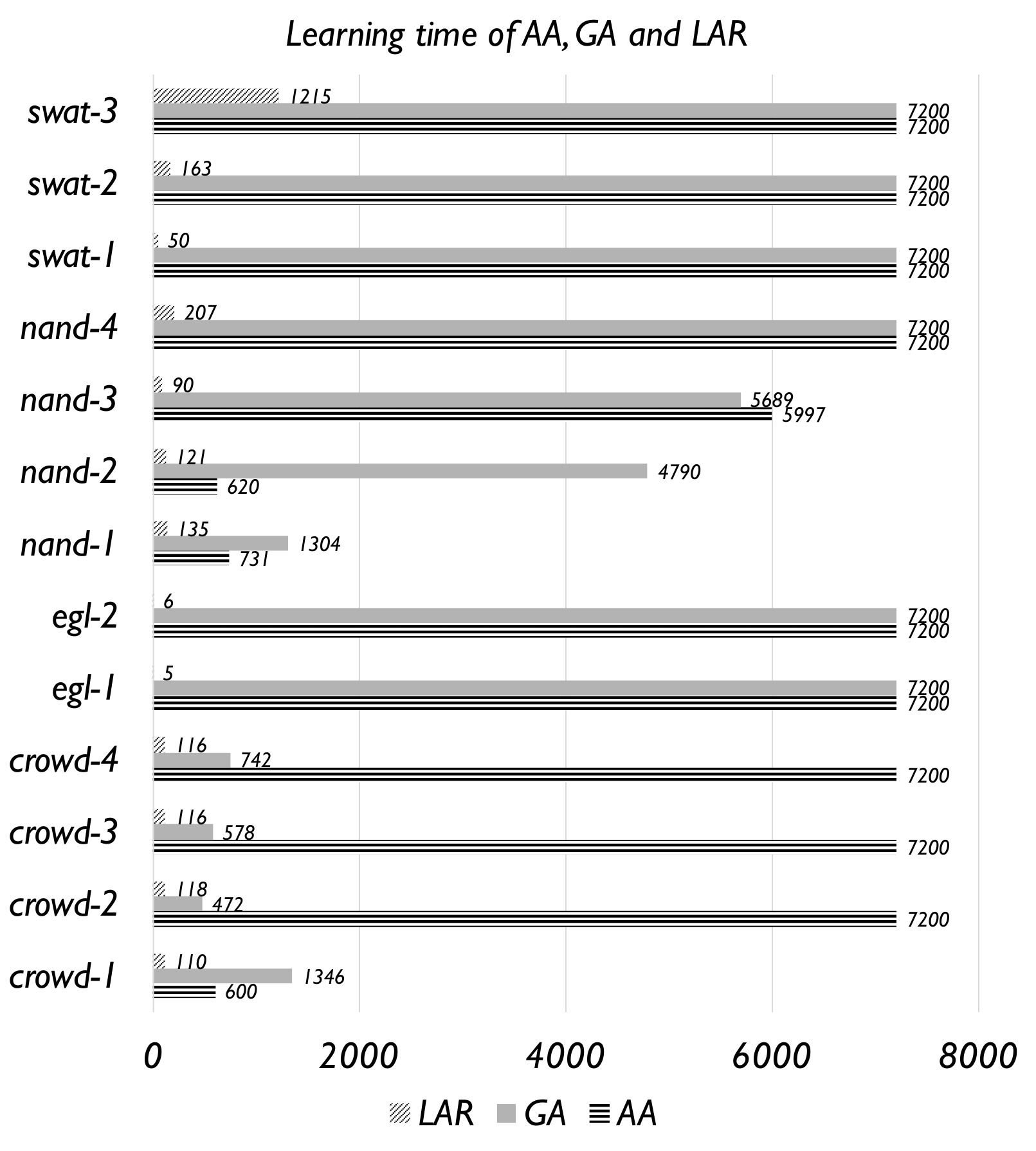}
\caption{Learning time of AA, GA and LAR in seconds. The preset timeout is 7200 s.}
\label{fig:learn-time}
\end{figure}

\begin{figure}[t]
\centering
\includegraphics[width=.5\textwidth]{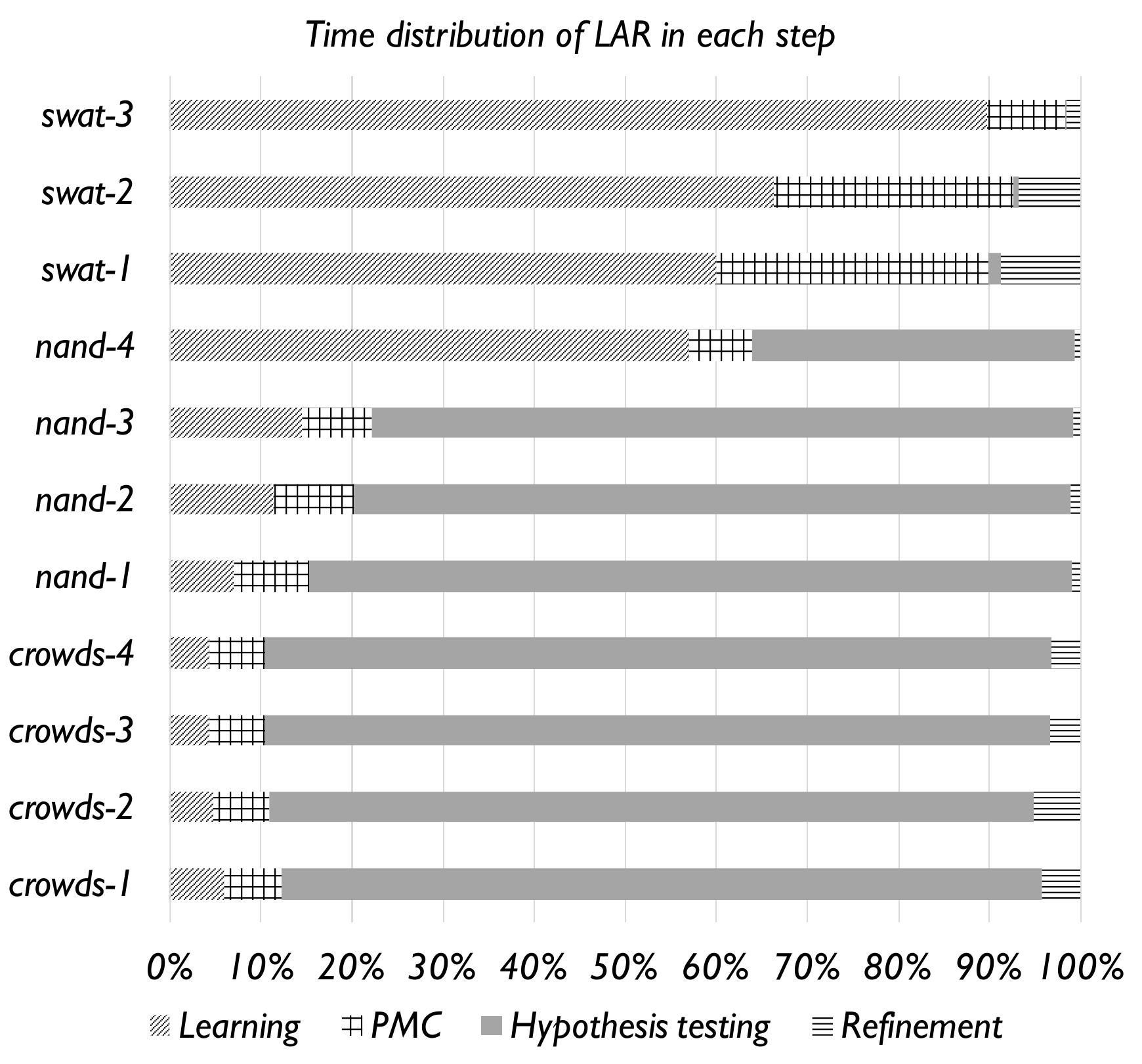}
\caption{The time distribution of LAR in each step including learning, PMC, hypothesis testing and refinement.}
\label{fig:lar-time}
\end{figure}

\begin{figure}[t]
\centering
\includegraphics[width=.5\textwidth]{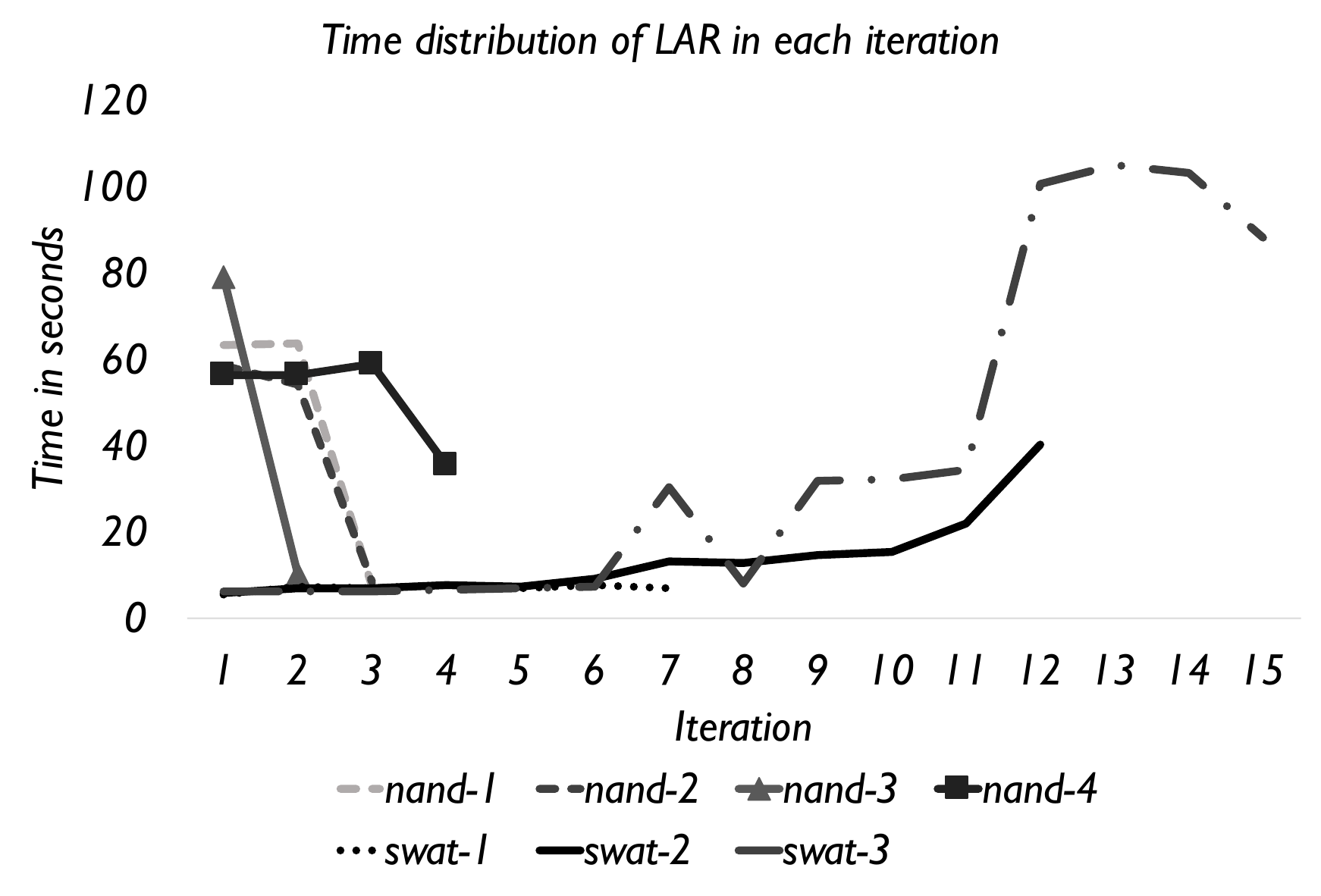}
\caption{The time distribution of LAR in each iteration.}
\label{fig:iter-time}
\end{figure}

\subsubsection*{RQ3: Is LAR able to learn models in a property-guided way?} 

Recall that LAR aims to learn a model which is sufficiently accurate to show that the probability of reaching certain states is less than $r$. Thus, with a larger $r$, ideally a coarser model which is easier to comprehend will be learned. In the above experiments, we fix $r$ to be 20\% more than the actual probability. In the following, we perform additional experiments to show the effect of having different $r$ values. That is, we run $r$ which are 40\%, 80\% and 100\% above the actual probability (for SWaT configuration 1 and 2, we run $r$ which are 2/3/4 times of the actual probability since $P_{act}$ is a small value). The results are shown in Table~\ref{tb:exp2} (-- means no results because the threshold is larger than 1.). %We remark that if the experiment has to be skipped if $r>1$.

It can be observed that for benchmark systems, similarly or identical models are learned by LAR for different $r$-values. The reason is that for these relatively small systems, it is easy to learn a model which produces accurate verification results after few iterations. In contrast, for the SWaT system, LAR learns a coarser model with fewer states, and less accurate result (which is still good enough to prove the property). We thus have the following answer to RQ3.
\begin{framed}
\noindent \emph{Answer to RQ3: LAR is able to learn models at different abstraction levels in a property-guided way.}
\end{framed}

\mypara{Threats to validity} The above experiments successfully show the effectiveness and efficiency of LAR over the test subjects. In the following, we discuss several threats to the validity. \textit{First of all,} it can be observed that a strict property (with probability close to the actual probability) requires learning a detailed model, whereas a loose property requires a coarse model. In the worse case, if the property can only be verified with all details in the system, then we must refine all the way until every detail is included. One question is then how we know whether the property is strict or not so as to tell whether LAR will be effective. In general, there is no good solution. \textit{Secondly,} there is a tradeoff between the complexity of the classification algorithm to generate the predicate and the interpretability of the learned model. For example, if the dataset is not linear separable and we apply kernel functions for SVM, the generated predicate will not be easy to interpret. For real-world complex systems, it might be possible that a linear classifier does not exist, in which case we will either fail to learn a model (without adopting kernels for SVM) or learn a model that is not intuitive for understanding (adopting kernels for SVM). \textit{Thirdly,} LAR refines the abstraction by choosing a particular spurious transition and separate the concrete states there. Thus, the selection of different spurious transitions may lead to different ways of refinement and thus models of different quality. \textit{Lastly,} LAR uses statistical hypothesis testing to check the spuriousness of a counterexample in every iteration. An alternative way is to apply SMC to check the given property directly, learn a model from the traces captured during SMC if the property is violated and explain to the user how the system fails. It is in general hard to predict which approach is more efficient because it depends on the specific property and system. For instance, in case the property is indeed violated by the system, the alternative approach might be more efficient in triggering such violations. However, the improvement depends on the difference between the probability threshold in the property and the actual probability of property violation. The larger the difference, the more significant is the improvement. In reality, CPS are usually equipped with safety mechanisms to avoid safety violations, which means that the probability of the actual safety violation is usually low. In case the property is satisfied, the comparison boils down to comparing the efficiency of spuriousness checking and the efficiency of SMC, which is also hard to tell. According to our experiments, a spurious counterexample can usually be checked quickly using SPRT during each iteration due to the wrong abstraction. On the other hand, SMC for a property with low probability may take a long time. In the future, we will evaluate them in a separate work. 

% Lastly, the AALERGIA algorithm is guaranteed to converge to the actual underlying model in the limit. However, we can only learn from limited number of traces in reality.      

\section{Related Work} \label{related}
This work takes a further step towards the emerging trend to leverage system data for formal verification. It extends the recent line of work to `learn' (probabilistic especially) models (e.g., in the form of DTMC, CTMC, stationary models and MDPs) from system data for model checking~\cite{AA,AAJ,sen2004learning,chen2012learning,mao2012learning} in order to avoid manual modeling. Existing learning algorithms are often based on algorithms designed for learning (probabilistic) automata, as evidenced in~\cite{ron1996power,ron1995learnability,carrasco1994learning,de2010grammatical,carrasco1999learning,angluin1987learning}. In~\cite{brazdil2014verification}, reinforcement learning algorithms are adopted to verify Markov decision processes, without constructing explicit models.
LAR complements these model learning approaches with a CEGAR-style framework so that model learning can be done in a more guided way, i.e., learn at the proper level of abstraction towards the verification of certain properties. 
On the other hand, rather than learning a single model from system data, the authors of~\cite{haesaert2017data,abate2017formal} proposes a Bayesian inference approach which performs verification over the entire property- and model-based feasible class of models while providing a confidence over the data-generating system.

This work is also closely related to the following line of work which aims to quantitatively verify stochastic hybrid systems with confidence. In~\cite{haesaert2017verification,abate2015adaptive,soudjani2015quantitative,esmaeil2013adaptive}, the authors uses formal abstraction to quantitatively approximate the system with an abstract computable model. The approximation is parameterized, which then allows to quantify the distance between the models as well as their solution processes~\cite{zamani2014symbolic}. Furthermore, a CEGAR-style procedure is proposed to refine the abstraction when the approximate model is not satisfactory~\cite{tkachev2017quantitative}.     

SMC~\cite{younes2002probabilistic,sen2004statistical,younes2011statistical,rohr2013simulative,clarke2011statistical} is another line of work for quantitative verification that can be applied when the system model is not available. Hypothesis testing is initially adopted by SMC mainly for bounded properties.
 There is some recent work on extending SMC to unbounded properties~\cite{younes2011statistical,rohr2013simulative}, and
  combining learning and abstraction for faster SMC~\cite{learningAbstractionSMC}. The main difference between LAR and SMC is that LAR generates models as a part of the verification results, which would offer knowledge or insight on how the system works and why the property is verified. Furthermore, because LAR verifies the system based on the learned model, it is not limited to bounded properties.

The main idea of this work is inspired by CEGAR~\cite{cegar,clarke2003counterexample,henzinger2002lazy,henzinger2004abstractions,kroening2004counterexample} and in particular its extension to probabilistic systems. The fundamental questions and pragmatic issues of probabilistic abstraction refinement are first explored in~\cite{probcegar} in the context of predicate abstraction~\cite{graf1997construction,wachter2007probabilistic}, where a probabilistic counterexample (obtained using the approach documented in~\cite{cegeneration}) is analyzed to refine an abstraction. Compared to CEGAR and probabilistic CEGAR, LAR is different as LAR does not require any user-provided system models, which yields a completely automatic verification framework from system data.

\section{Conclusion}\label{con}
In this work, we propose a framework to automatically verify discrete-time complex systems without manual modeling through a combination of learning, abstraction and refinement. Our evaluation shows that LAR not only is more effective and efficient in verifying systems than state-of-the-art learning approaches by automatically identifying a level of abstraction in a property-guided way, but also generates models of high quality which could be useful for subsequent system analysis purposes like runtime monitoring or model-based testing.
Our main contribution lies in proposing such an automatic verification framework from system data, a systematic way to analyze spurious counterexamples for refinement, and adopt SVM to generate new predicates from system data directly without the assumption of having the original system model. In the future, we plan to compare our approach with statistical model checking by conducting a more systematic empirical study and extend the work to Markov Decision Process to support modeling and verification of a richer range of systems.
% Our next goal is to extend the work to Markov Decision Process to support modeling and verification of a richer range of systems.

\bibliographystyle{plain}
\bibliography{bib/arf}
\appendices
\section{Bound on the learning error}\label{app1}
Let $A=(a_{ij})_{1\leq i,j\leq m}$ be a stochastic matrix of size $m\times m$, which represents the actual transition matrix of $M$ and $\hat{A}=(\hat{a}_{ij})_{1\leq i,j\leq m}$ be that of $D_P$. For simplicity, we assume $\hat{A}$ is learned through a simple algorithm based on Monte Carlo frequency estimate. That is, $\hat{a}_{ij}=n_{ij}/n_i$ where $n_i$ is the number of times a transition has been taken from state $i$ and $n_{ij}$ the number of times that transition from state $i$ to state $j$ has been taken. We denote $n_0=\min_{1\leq i\leq m} n_i$.

Given a precision bound $\epsilon$ and a confidence bound $\delta$, we show that it is feasible to determine a sampling scheme that guarantees after a minimal number of samples:
\begin{equation}
\label{obj}
P(||\hat{A}-A||>\epsilon)<\delta
\end{equation}
where $||\hat{A}-A||$ denotes a matrix distance between $\hat{A}$ and $A$.

In what follows, let the distance be defined as $||\hat{A}-A||=\sup_{ij}|\hat{a}_{ij}-a_{ij}|$.

\begin{eqnarray*}
P(||\hat{A}-A||>\epsilon)&=&P(\sup_{ij}|\hat{a}_{ij}-a_{ij}|>\epsilon)\\
&=& P\left(\bigcup_{ij}\,\lbrace|\hat{a}_{ij}-a_{ij}|>\epsilon\rbrace\,\right)\\
&\leq& \sum_{ij} P(|\hat{a}_{ij}-a_{ij}|>\epsilon)\\
&\leq& \sum_{ij} P(|\frac{n_{ij}}{n_i}-a_{ij}|>\epsilon)\\
&\leq& \sum_{ij} 2\exp(-2n_i\epsilon^2)\quad\text{(Okamoto bound)} \\
&\leq& 2m^2\exp(-2n_0\epsilon^2)
\end{eqnarray*}

Based on the above, given $\delta$ and $\epsilon$, $\Pi$ must contain enough traces such that each state is visited at least $n_0=\lceil \frac{1}{2\epsilon^2}\log\left(\frac{\delta}{2m^2}\right)\rceil$ if we are to guarantee that the learned model satisfies the constraint stated in Equation~\ref{obj}.

This bound is conservative. However, this sampling scheme terminates with probability $1$ since all the states are reachable. Furthermore, there are several ways to improve this bound. First, the term $m^2$ could be replaced by the number of transitions $K$ such that $a_{ij}$ is different from $0$ or $1$. In that case, $P(|\hat{a}_{ij}-a_{ij}|>\epsilon)$ is necessarily equal to $0$, then sequential frequentist methods based on sharper bounds may be used (e.g. \cite{QEST17}) instead of the used Okamoto bound. Besides, the number of samples is also impacted by the choice of $\epsilon$. 
% In general, for probability estimation, the absolute error $\epsilon$ is chosen "small" but for a matrix, it may make more sense to choose a value $\epsilon'$ about $K\epsilon$.
 Finally, the choice of the matrix distance is also important. For example, at the learning level, we could prefer a distance that takes into account probabilities $p_i$ of being in state $i$ in order to guarantee a fine per-transition error bound over states that are often visited and a coarser bound over states that are not often visited. This is left to future work.
% biography section
% 
% If you have an EPS/PDF photo (graphicx package needed) extra braces are
% needed around the contents of the optional argument to biography to prevent
% the LaTeX parser from getting confused when it sees the complicated
% \includegraphics command within an optional argument. (You could create
% your own custom macro containing the \includegraphics command to make things
% simpler here.)
%\begin{IEEEbiography}[{\includegraphics[width=1in,height=1.25in,clip,keepaspectratio]{mshell}}]{Michael Shell}
% or if you just want to reserve a space for a photo:

\begin{IEEEbiography}[{\includegraphics[width=1in,height=1.25in,clip,keepaspectratio]{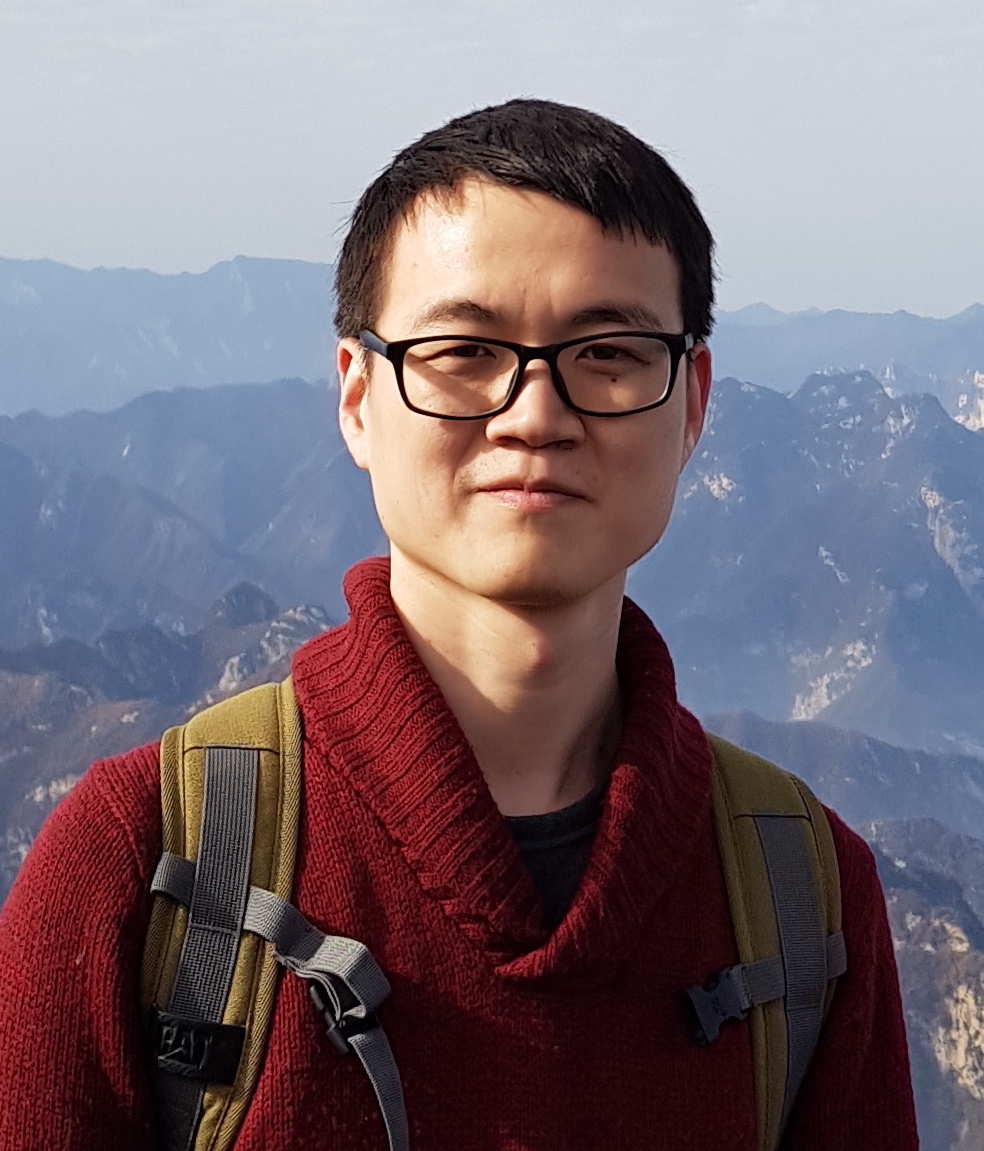}}]{Jingyi Wang}
is currently a research fellow in the pillar of Information Systems Technology and Design at Singapore University of Technology and Design, Singapore, where he obtained his PhD in March 2018. Before this, he got his Bachelor of Engineering degree in Information Engineering from Xi'an Jiaotong University, Xi'an, China in 2013. His main research interests are to apply machine learning and statistical techniques into formal verification as well as security problems in deep learning systems.
\end{IEEEbiography}

% % if you will not have a photo at all:
\begin{IEEEbiography}[{\includegraphics[width=1in,height=1.25in,clip,keepaspectratio]{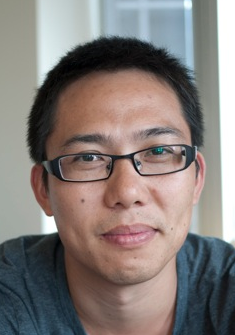}}]{Jun Sun} is currently an associate professor at Singapore University of Technology and Design (SUTD). He received Bachelor and PhD degrees in computing science from National University of Singapore (NUS) in 2002 and 2006. In 2007, he received the prestigious LEE KUAN YEW postdoctoral fellowship. He has been a faculty member of SUTD since 2010. He was a visiting scholar at MIT from 2011-2012. Jun's research interests include software engineering, formal methods, program analysis and cyber-security. He is the co-founder of the PAT model checker. 
\end{IEEEbiography}

% % insert where needed to balance the two columns on the last page with
% % biographies
% %\newpage

\begin{IEEEbiography}[{\includegraphics[width=1in,height=1.25in,clip,keepaspectratio]{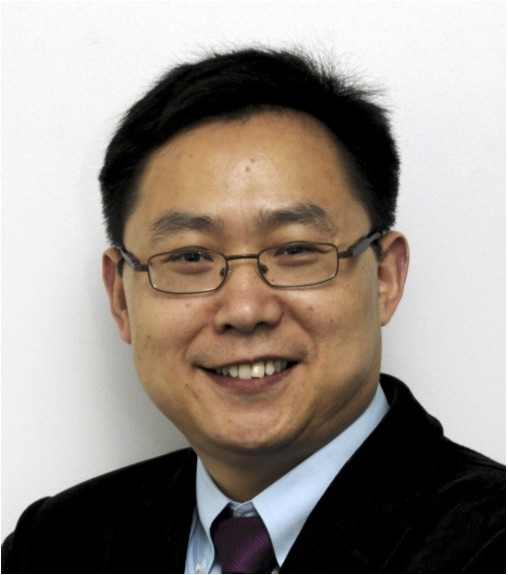}}]{Shengchao Qin} got his PhD in Applied Mathematics from Peking University and also worked as a Postdoctoral
Research Fellow in National University of Singapore under the Singapore-MIT Alliance program, before moving
his job to UK. His research interests lie mainly in formal methods, software engineering and programming languages,
in particular, formal specification and modelling, program analysis and verification, theories of programming, program
logic such as separation logic. To this date he has published over 90 papers in international journals and peer-refereed
international conferences.
\end{IEEEbiography}

\begin{IEEEbiography}[{\includegraphics[width=1in,height=1.25in,clip,keepaspectratio]{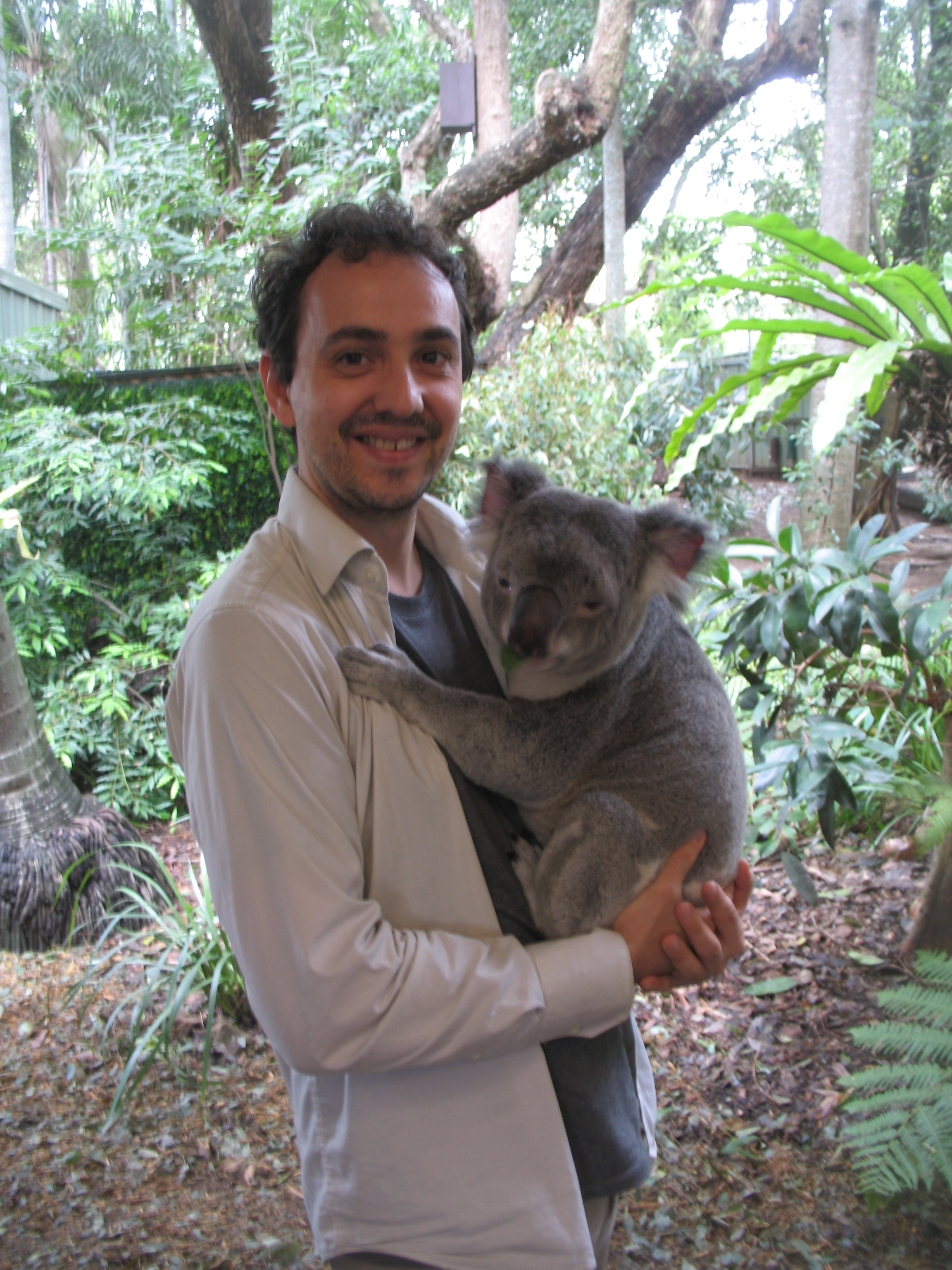}}]{Cyrille Jegourel} was born in France, on October 22, 1984. He received the B.S. degree in mathematics in 2005, the M.S. degree in statistical engineering in 2008, and the Ph.D. degree in computer science in 2014 from the University of Rennes 1, France. He was a development and research engineer at Inria Rennes, Bretagne Atlantique, France, from time to time between 2008 and 2015. He was a postdoctoral Research Fellow at the National University of Singapore, Singapore, in 2015. In 2016, he joined the Singapore University of Technology and Design, Singapore, where he was engaged as a Research Fellow to work on sampling techniques and statistical aspects for formal methods.
\end{IEEEbiography}
\vfill

\end{document}